\crefname{hypothesis}{Hypothesis}{Hypotheses}
\crefname{condition}{Condition}{Conditions}
\crefname{assumption}{Assumption}{Assumptions}
\newlist{condenum}{enumerate}{1}
\setlist[condenum]{label=\alph*), ref=\thecondition~(\alph*)}
\DeclareMathOperator{\diam}{diam}
\DeclareMathOperator{\supp}{supp}
\DeclareMathOperator{\law}{Law}
\DeclareMathOperator*{\argmin}{arg\,min}
\DeclareMathOperator*{\cov}{Cov}
\DeclareMathOperator*{\lip}{Lip}
\tikzset{>=stealth}
\title{Sampling via F\"ollmer Flow\thanks{Submitted to the editors \today.
}}
\author{
Zhao Ding\thanks{School of Mathematics and Statistics, Wuhan University, Wuhan, People’s Republic of China(\email{zd1998@whu.edu.cn}).}
\and Yuling Jiao\thanks{School of Mathematics and Statistics and Hubei Key Laboratory of Computational Science, Wuhan University, Wuhan, People’s Republic of China(\email{yulingjiaomath@whu.edu.cn}).}
\and Xiliang Lu\thanks{School of Mathematics and Statistics and Hubei Key Laboratory of Computational Science, Wuhan University, Wuhan, People’s Republic of China(\email{xllv.math@whu.edu.cn}).}
\and Zhijian Yang\thanks{School of Mathematics and Statistics and Hubei Key Laboratory of Computational Science, Wuhan University, Wuhan, People’s Republic of China(\email{zjyang.math@whu.edu.cn}).}
\and Cheng Yuan\thanks{School of Mathematics and Statistics, Wuhan University, Wuhan, People’s Republic of China(\email{yuancheng@whu.edu.cn}).}
}
\newcommand*{\addFileDependency}[1]{%
  \typeout{(#1)}%
  \@addtofilelist{#1}%
  \IfFileExists{#1}{}{\typeout{No file #1.}}%
}
\begin{document}

\maketitle

\begin{abstract}
 We introduce a novel unit-time ordinary differential equation (ODE) flow called the preconditioned F\"{o}llmer flow, which efficiently transforms a Gaussian measure into a desired target measure at time 1. To discretize the flow, we apply Euler's method, where the velocity field is calculated either analytically or through Monte Carlo approximation using Gaussian samples.
  Under reasonable conditions, we derive a non-asymptotic error bound in the Wasserstein distance between the sampling distribution and the target distribution. Through numerical experiments on mixture distributions in 1D, 2D, and  high-dimensional spaces, we demonstrate that the samples generated by our proposed flow exhibit higher quality compared to those obtained by several existing methods.
  Furthermore, we propose leveraging the F\"{o}llmer flow as a warmstart strategy for existing Markov Chain Monte Carlo (MCMC) methods, aiming to mitigate mode collapses and enhance their performance.
  Finally, thanks to the deterministic nature of the F\"{o}llmer flow, we can leverage deep neural networks to fit  the trajectory of sample evaluations. This allows us to obtain a generator for one-step sampling as a result.
\end{abstract}

\begin{keywords}
  Sampling, ODE flow, preconditioning, Monte Carlo method,  Wasserstein-2 bound
\end{keywords}

\begin{AMS}
62D05,58J65,60J60
\end{AMS}

\section{Introduction}
 Sampling from probability distributions constitutes a foundational task within the realms of statistics and machine learning \cite{cui2016109,salakhutdinov2015learning}. 
 For instance, the efficacy of Bayesian inference heavily relies on the capacity to generate samples from a posterior distribution, even when only an unnormalized density function is accessible \cite{gelman1995bayesian, changye2020markov}. Concurrently, contemporary advancements in generative learning are centered around the technique of sampling from a distribution for which the probability density remains unknown, but random samples can be obtained \cite{song2020score}.
  
To sample from high-dimensional probability distributions, a vast array of sampling techniques has been extensively developed in the literature. 
Among these, a prominent category comprises Markov Chain Monte Carlo (MCMC) methods, which encompass a variety of approaches. 
Notable examples include the Metropolis-Hastings algorithm \cite{metropolis1953equation, hastings1970monte, tierney1994markov}, the Gibbs sampler \cite{geman1984stochastic, gelfand1990sampling}, the Langevin algorithm \cite{roberts1996exponential, dalalyan2017theoretical, durmus2017nonasymptotic}, and the Hamiltonian Monte Carlo algorithm \cite{duane1987hybrid, neal2011mcmc}, among others. 
These stochastic methods give rise to an ergodic Markov chain, characterized by an invariant distribution that converges to the desired target distribution.
Under the assumption of strongly convex potentials, MCMC samplers exhibit favorable convergence properties, as established in previous works \cite{durmus2019high, Durmus2016SamplingFA, dalalyan2017further, cheng2018convergence, dalalyan2019user}. 
Moreover, researchers have explored alternative conditions to replace the strongly convex potential assumption, such as the dissipativity condition on the drift term \cite{raginsky2017non, mou2022improved, zhang2023nonasymptotic}, the local convexity condition on the potential function \cite{durmus2017nonasymptotic, cheng2018convergence, ma2019sampling, bou2018coupling}, and other less restrictive assumptions \cite{cao2023explicit}.

The aforementioned sampling algorithms perform effectively when the target probability distribution exhibits certain niceties, such as being log-concave or unimodal. 
However, when the target distribution is multimodal, the sampling task becomes significantly more challenging. 
Even for a simple one-dimensional Gaussian mixture model, such as $0.5 N(-1, \sigma^2) + 0.5 N(1, \sigma^2)$, it has been observed that optimally tuned Hamiltonian Monte Carlo and random walk Metropolis algorithms exhibit mixing times that scale with $e^{\frac{1}{2 \sigma^2}}$ as $\sigma \to 0$ \cite{mangoubi2018does, dunson2020hastings}. 
In real-world applications, multimodal distributions are frequently encountered. 
In this context, the available theoretical guarantees for MCMC algorithms tend to be weaker, and numerical results often encounter energy barriers between modes. 
To address this challenge, various techniques have been proposed, including tempering methods \cite{swendsen1986replica, marinari1992simulated, neal2001annealed}, the equi-energy sampler \cite{holmes2006bayesian}, biasing techniques \cite{wang2001efficient, laio2002escaping}, the birth-death algorithm \cite{lu2019accelerating, tan2023accelerate}, and accelerating methods\cite{parno2018transport}, among others.
 
 In contrast to the stochastic sampling methods previously discussed, deterministic methods for sampling also exist. 
 In an ideal scenario, one can sample from any one-dimensional probability distribution by applying the inverse transform of the cumulative distribution function (CDF) of that distribution on a uniform distribution. 
 The computational cost for generating a sample in this way is typically just one function evaluation of the inverse transform. 
 However, as straightforward and efficient as this method may be, it becomes largely intractable for high-dimensional unnormalized sampling problems.
One possible approach to tackle the sampling problem is to build an optimal transport map from a reference distribution that is straightforward to sample from, such as the Gaussian distribution, to the target distribution. 
However, in practice, computing this optimal transport map can be a challenging task \cite{marzouk2016sampling,gao2020generative, alfonso2023generative}, as it often necessitates solving the Monge-Ampère equation \cite{villani2009optimal}.
 
\subsection*{Contributions}

We propose a new unit-time ODE flow called the F\"ollmer flow, which efficiently transforms a Gaussian measure into a desired target measure at time 1. 

\begin{figure}[htbp]
  \centering
  \includegraphics[width=.5\textwidth]{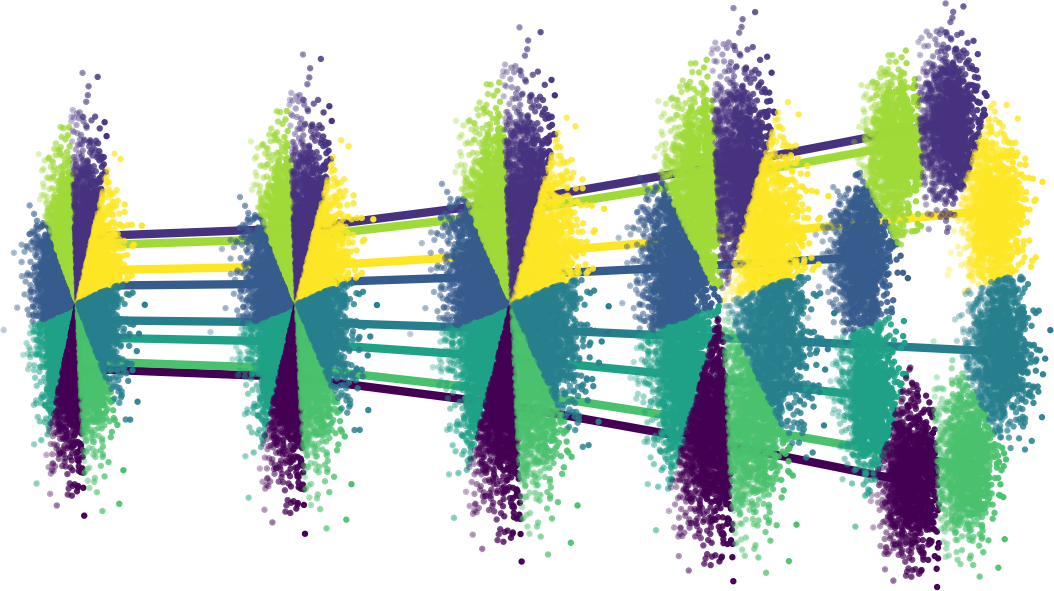}
  \caption{Time evolution of the F\"{o}llmer flow for 8-mode circle shaped Gaussian mixture, with the trajectory of particles from different modes.}
  \label{fig:evo}
\end{figure}

The Föllmer flow offers several key advantages:
\begin{itemize}
\item 
It defines a unit-time scheme without relying on ergodicity.
\item 
Numerical results demonstrate that it effectively addresses the challenge of multi-modality in sampling. See \cref{fig:evo} for illustration, and find more results in \cref{sec:experiments}.
\item 
We theoretically prove that the sampling error in the Wasserstein-2 distance is of order $\mathcal{O} \left( \sqrt[3]{\frac{d}{K^2}} \right) + \mathcal{O} \left( \frac{d}{KM} \right)$, where $K$ is the Euler discretization level, $M$ is the number of Monte Carlo samples, and $d$ is the dimension.
\item 
It sheds light on the development of one-step neural samplers, enhancing the efficiency of sampling algorithms.
\item 
It can be seamlessly integrated with existing MCMC methods as a warm starter, aiding in the recovery of collapsed modes and achieving a favorable balance between sampling efficiency and accuracy.
\end{itemize}

 \subsection*{Related work}
 Recently, the construction of transportation maps and sampling based on such maps has garnered considerable attention. 
 Dai et al. \cite{dai2023lipschitz} formally constructed the Föllmer flow and delved into the well-posedness and related properties of the flow map.
Albergo et al. \cite{albergo2023stochastic, albergo2022building} explored a unit-time normalizing flow that is relevant to the Föllmer flow. Their work takes the perspective of stochastic interpolation between the reference measure and the target measure.
Huang et al. \cite{huang2021schrodingerfollmer, jiao2021convergence} introduced the Schrödinger-Föllmer sampler (SFS), which transports the degenerate distribution at time 0 to the target distribution at time 1.
Liu and Wang \cite{liu2016stein} proposed Stein Variational Gradient Descent (SVGD), which transports particles to match the target distribution.
Liu et al. \cite{liu2022flow} presented the rectified flow, which learns neural ODE models to transport between two empirically observed distributions.
Qiu and Wang \cite{qiu2023efficient} fitted an invertible transport map between a reference measure and the target distribution.
Huang et al. \cite{huang2023monte} considered sampling from the target distribution by reversing the Ornstein-Uhlenbeck diffusion process.
Inspired by the work of Dai et al. \cite{dai2023lipschitz}, we further extended the Föllmer flow to a preconditioned version. Based on this extension, we propose a novel sampling method and establish an error bound for the proposed sampling scheme.

 \subsection*{Organization}
The remainder of this paper is structured as follows. 
In this section, we provide notations and introduce the necessary preliminaries.
In Section \ref{sec:follmer-flow}, we introduce the preconditioned version of the F\"{o}llmer flow and discuss its well-posedness and associated properties.
In Section \ref{sec:scheme}, we present various equivalent formulations of the velocity field for the ODE system and implement the F\"{o}llmer flow using Euler's method.
In Section \ref{sec:error-analysis}, we analyze the convergence of the numerical scheme under reasonable conditions (see \cref{cond:lip,cond:bound}).
Finally, in Section \ref{sec:experiments}, we conduct numerical experiments to demonstrate its performance
The derivation and establishment of the well-posedness of the F\"{o}llmer flow can be found in Appendix \ref{app:derivation}. Further insights into the error analysis of the Monte Carlo F\"ollmer flow are provided in Appendix \ref{app:analysis}. Detailed numerical settings are documented in Appendix \ref{app:settings}.

\subsection{Notations}
The space $\mathbb{R}^d$ is endowed with the standard Euclidean metric and we denote by $|\cdot|$ and $\langle \cdot, \cdot \rangle$ the corresponding norm and inner product.
Let $\mathbb S^{d-1} :=\{x\in \mathbb{R}^d: |x|=1 \}$.
The operator norm of a matrix $A \in \mathbb{R}^{m \times n}$ is denoted by $\| A \|_{\mathrm{op}} := \sup_{x\in \mathbb{R}^n, |x|=1} \|Ax\|$ and $A^{\top}$ is the transpose of $A$.
We use $\mathrm{I}_d$ to denote the $d \times d$ identity matrix.
For a twice continuously differentiable function $f$, let $\nabla f, \nabla^2 f$, and $\Delta f$ denote its gradient, Hessian, and Laplacian with respect to the spatial variable, respectively.
For $k, d, n \ge 1$, we denote by $C^n(\mathbb{R}^k; \mathbb{R}^d)$ the space of continuous functions $f: \mathbb{R}^k \to \mathbb{R}^d$ that are $n$ times differentiable and whose partial derivatives of order $n$ are continuous.
The Borel $\sigma$-algebra of $\mathbb{R}^d$ is denoted by $\mathcal{B}(\mathbb{R}^d)$.
The space of probability measures defined on $(\mathbb{R}^d, \mathcal B(\mathbb{R}^d))$ is denoted as $\mathcal P(\mathbb{R}^d)$.
For any $\mathbb{R}^d$-valued random variable $X$, we use $\mathbb{E}[X]$ and $\cov(X)$ to denote its expectation and covariance matrix, respectively.
We use $\mu * \nu$ to denote the convolution for any two probability measures $\mu$ and $\nu$.
Let $g: \mathbb{R}^k \to \mathbb{R}^d$ be a measurable mapping and $\mu$ be a probability measure on $\mathbb{R}^k$. The push-forward measure $f_{\#} \mu$ of a measurable set $A$ is defined as $f_{\#} \mu := \mu(f^{-1} (A))$.
Let $\gamma^{\mu, \Sigma}$ denote the $d$-dimensional Gaussian measure with mean vector $\mu \in \mathbb{R}^d$ and covariance matrix $\Sigma \in \mathbb{R}^{d \times d}$,
and let $\varphi^{\mu, \Sigma}(x)$ denote the probability density function of $\gamma^{\mu, \Sigma}$ with respect to the Lebesgue measure.
If $\mu = \mathbf{0}, \Sigma = \mathrm{I}_d$, we briefly write down $\gamma_d$ and $\varphi(x)$.
$\nu\left(\mathrm{d}x\right) = p\left(x\right)\mathrm{d}x$ is the target probability measure on $\mathbb{R}^{d}$.
Let $\delta_x$ denote the Dirac measure centered on some fixed point x.
For two probability measure $\mu, \nu \in \mathcal P(\mathbb{R}^d)$, we denote $\mathcal C(\mu, \nu)$ the set of transference plans of $\mu$ and $\nu$. and the Wasserstein distance of order $p \geq 1$ is defined as
\[ \mathcal W_p (\mu, \nu) := \inf_{\Pi \in \mathcal C (\mu, \nu)} \left( \int_{\mathbb{R^d} \times \mathbb{R}^d} | x-y |^p \Pi(\mathrm{d}x, \mathrm{d}y) \right)^{1/p}. \]

\subsection{Preliminaries}
\label{sec:main}
We introduce two definitions to delineate the convexity properties of probability measures and introduce relevant notations.

\begin{definition}[$\kappa$-semi-log-concave, \cite{cattiaux2014semi,mikulincer2021brownian}]
  A probability measure \[\mu \left( dx \right)  = \exp{\left( - U \left( x \right) \right)}\mathrm{d}x\] is $\kappa$-semi-log-concave
  for some $\kappa \in \mathbb{R}$ if its support $\Omega \subset \mathbb{R}^d$ is convex and $ U \in C^2 \left( \Omega \right)$ satisfies
  \[ \nabla ^2 U \left( x \right) \succeq \kappa \mathrm{I}_d, \quad\forall x \in \Omega. \]
\end{definition}

\begin{definition}[$\beta$-semi-log-convex, \cite{eldan2018regularization}]
  A probability measure \[\mu \left( dx \right)  = \exp{ \left( - U \left( x \right) \right)}\mathrm{d}x\] is $\beta$-semi-log-convex
  for some $\beta > 0$ if its support $\Omega \subset \mathbb{R}^d$ is convex and $ U \in C^2 \left( \Omega \right)$ satisfies
  \[ \nabla^2 U \left( x \right) \preceq \beta \mathrm{I}_d, \quad\forall x \in \Omega. \]
\end{definition}

We address three conditions under which we will build the F\"{o}llmer flow later.

\begin{condition}
  \label{cond:cond1}
  The probability measure $\nu$ has a finite third moment and is absolutely continuous with respect to the standard Gaussian measure $\gamma_{d}$.
\end{condition}

\begin{condition}
  \label{cond:cond2}
  The probability measure $\nu$ is $\beta$-semi-log-convex for some $\beta > 0$.
\end{condition}

\begin{condition}
  \label{cond:cond3}
  Let $D := \left(1 / \sqrt{2}\right) \diam\left(\supp\left(\nu\right)\right)$. The probability measure $\nu$ satisfies one or more of the following conditions:
  \begin{condenum}
    \item $\nu$ is $\kappa$-semi-log-concave for some $\kappa > 0$ with $D \in (0, \infty]$; \label{item:non-neg-kappa}
    \item $\nu$ is $\kappa$-semi-log-concave for some $\kappa \leq 0$ with $D \in (0, \infty)$; \label{item:neg-kappa}
    \item $\nu = N\left( \textbf{0}, \sigma^{2}\mathrm{I}_{d}\right) * \rho $ where $\rho$ is a probability measure supported on a ball of radius $R$ on $\mathbb{R}^{d}$. \label{item:conv}
  \end{condenum}
\end{condition}

\begin{remark}
  It is worth noting that \cref{item:neg-kappa} is of practical value as it corresponds to the multimodal distributions.
\end{remark}

\begin{remark}
  \Cref{item:conv} covers the Gaussian mixture examples in later numerical studies (see \cref{table:example}).
\end{remark}

\section{Preconditioned F\"{o}llmer flow}
\label{sec:follmer-flow}
{ In \cite{albergo2023stochastic,albergo2022building,gao2023gaussian}, the authors propose the stochastic interpolation}, and the F\"{o}llmer flow \cite{dai2023lipschitz} can be seen as a special case of the interpolation.
By similar techniques, we obtain a preconditioned version of the F\"{o}llmer flow which starts from a general Gaussian measure $\gamma^{\mu, \Sigma}$ instead of the standard one, where $\mu$ is the mean vector, and $\Sigma$ is a symmetric and positive semi-definite matrix that admits Cholesky decomposition $\Sigma = A A^\top$.

\subsection{Preconditioned F\"{o}llmer flow}
\begin{definition}[Preconditioned F\"{o}llmer flow]
  \label{def:fflow}
  Suppose that probability measure $\nu$ satisfies \cref{cond:cond1}. If $\left(X_{t}\right)_{t \in [0, 1]}$ solves the IVP
  \begin{equation}
    \label{eq:reverse-ivp}
    \frac{\mathrm{d}X_{t}}{\mathrm{d}t} = V\left( t, X_{t} \right), \quad X_{0} \sim \gamma^{\mu, \Sigma}, \quad t \in [0, 1].
  \end{equation}
  The velocity vector field $V$ is defined by
  \begin{equation}
    \label{eq:v-in-s}
    V(t, x) := \frac{x - \mu + \Sigma S(t, x)}{t}, \forall t \in (0, 1]; \quad V\left(x, 0\right) := \mathbb{E}_{\nu}[X] - \mu; \quad r\left(x\right) := \frac{\mathrm{d}\nu}{\mathrm{d}\gamma^{\mu, \Sigma}}\left(x\right),
  \end{equation}
  where $S(t, x)$ is the score function defined in \cref{eq:score-def}.
  We call $\left(X_{t}\right)_{t \in [0, 1]}$ a F\"{o}llmer flow and $V\left(t, x\right)$ a F\"{o}llmer velocity field associated to $\nu$, respectively.
\end{definition}

\begin{remark}[Lipschitz property]
  \label{rmk:lip-x1}
  Suppose that \cref{cond:cond1,cond:cond2,cond:cond3} hold, then the F\"{o}llmer velocity $V(t, x)$ is Lipschitz continuous with Lipschitz constant $\theta_t$,
  and the F\"{o}llmer flow \cref{eq:reverse-ivp} $X_t$ is a Lipschitz mapping at time $t=1$, with Lipschitz constant $\exp(\int_{0}^{1} \theta_s \mathrm{d}s)$.
\end{remark}

\begin{theorem}[Well-posedness]
  \label{thm:well-pose}
  Suppose that \cref{cond:cond1,cond:cond2,cond:cond3} hold. Then the F\"{o}llmer flow $\left( X_{t} \right)_{t \in [0, 1]}$ associated to $\nu$ is a unique solution to the IVP \cref{eq:reverse-ivp}.
  Moreover, the push-forward measure $\gamma^{\mu, \Sigma} \circ \left( X_{1}^{-1} \right) = \nu$.
\end{theorem}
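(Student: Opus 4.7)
The plan is to reduce the preconditioned flow to the unpreconditioned F\"ollmer flow of \cite{dai2023lipschitz} by a linear change of variables, and then invoke the well-posedness already established for the standard case. First I would set $Y_t := A^{-1}(X_t - \mu)$, so that $Y_0 \sim \gamma_d$ whenever $X_0 \sim \gamma^{\mu,\Sigma}$. A direct computation, exploiting $\Sigma = A A^\top$ together with the affine change-of-variables rule for the score $S(t,x)$ in \cref{eq:score-def} under $x = \mu + A y$, shows that $Y_t$ satisfies $\mathrm{d}Y_t/\mathrm{d}t = \widetilde{V}(t, Y_t)$ with $\widetilde{V}(t,y) = (y + \widetilde{S}(t,y))/t$, where $\widetilde{S}$ is the score of the push-forward measure $\widetilde{\nu} := (A^{-1}(\cdot - \mu))_{\#}\nu$ with respect to $\gamma_d$.

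Next I would verify that $\widetilde{\nu}$ inherits \cref{cond:cond1,cond:cond2,cond:cond3} from $\nu$: absolute continuity with respect to $\gamma_d$ and a finite third moment follow directly; $\beta$-semi-log-convexity transfers with modified constant $\beta\,\|A\|_{\mathrm{op}}^{2}$ by the chain rule on the Hessian; each subcase of \cref{cond:cond3} is preserved, since convex supports and diameters transform by a factor involving $\|A^{-1}\|_{\mathrm{op}}$, and the convolution structure of \cref{item:conv} is retained because the Gaussian factor is closed under affine maps. With $\widetilde{\nu}$ satisfying the hypotheses, the standard F\"ollmer-flow well-posedness, combined with the Lipschitz bound recalled in \cref{rmk:lip-x1}, gives a unique solution $Y_t$ on $[0,1]$ whose law at $t=1$ is $\widetilde{\nu}$. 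Transferring back, $X_t = \mu + A Y_t$ is the unique solution of \cref{eq:reverse-ivp}, and the push-forward identity $\gamma^{\mu,\Sigma}\circ(X_1^{-1}) = \nu$ follows from the corresponding identity for $Y_1$.

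Should a direct argument be preferred, the Lipschitz constant $\theta_t$ from \cref{rmk:lip-x1} enables Cauchy--Lipschitz on each subinterval $[\varepsilon, 1]$, yielding local solutions which extend to $t=0$ using $V(t,x) \to \mathbb{E}_\nu[X] - \mu$ uniformly on compact sets. The identification $X_1 \sim \nu$ can then be obtained by constructing an auxiliary stochastic interpolant whose marginal density $q_t$ interpolates $\gamma^{\mu,\Sigma}$ at $t=0$ and $\nu$ at $t=1$, verifying that both $q_t$ and the density $\rho_t$ of $X_t$ satisfy the continuity equation $\partial_t \rho + \nabla\cdot(V \rho) = 0$ with the same initial datum, and invoking uniqueness of weak solutions to this equation under Lipschitz velocity to conclude $\rho_1 = q_1 = \nu$.

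I expect the main obstacle to be the singular behavior at $t=0$: justifying that $V$ extends continuously and with sufficient integrability down to the left endpoint, so that both the ODE theory and the continuity-equation argument close. This ultimately hinges on an asymptotic expansion of $S(t,x)$ as $t \to 0^+$, which in turn relies on dominated convergence applied to the Gaussian-weighted conditional expectation defining $S$, using \cref{cond:cond1} to control integrability of $r(x) = \mathrm{d}\nu/\mathrm{d}\gamma^{\mu,\Sigma}$.
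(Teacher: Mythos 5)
Your main strategy---reducing the preconditioned flow to the unpreconditioned one via $Y_t = A^{-1}(X_t-\mu)$ and then citing the standard F\"ollmer-flow result---is cleaner and more modular than what the paper actually does, but it is a genuinely different route. The paper proves well-posedness directly in the preconditioned setting (Appendix~\ref{app:derivation}): it starts from the forward noising SDE \cref{eq:sde} whose law interpolates $\nu$ at $t=0$ and $\gamma^{\mu,\Sigma}$ at $t=1$, passes to the associated Fokker--Planck equation and Lagrangian flow $(X_t^*)$, extends the velocity to $t=0$ via \cref{lem:well-posedness}, establishes the Hessian/Lipschitz bounds $\nabla V(t,x)\preceq\theta_t\mathrm{I}_d$ in \cref{thm:grad-v-upper-bd,thm:jac-v-bd}, and invokes Cauchy--Lipschitz followed by a time-reversal argument. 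Your ``direct argument'' paragraph is essentially this route in a backward-time formulation. The paper does quietly perform a WLOG reduction to $\mu=\mathbf{0}$, $\Sigma=s^2\mathrm{I}_d$ in \cref{subsec:lip-v}, so both proofs ultimately lean on a change-of-variables reduction; yours makes it explicit, which is arguably an improvement. One caveat in your condition-transfer step: your claim that \cref{item:conv} is preserved ``because the Gaussian factor is closed under affine maps'' is imprecise---under $y=A^{-1}(x-\mu)$ the isotropic $N(\mathbf{0},\sigma^2\mathrm{I}_d)$ becomes $N(\mathbf{0},\sigma^2(A^\top A)^{-1})$, which is isotropic only when $A$ is a scalar multiple of an orthogonal matrix. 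For a general PSD $\Sigma$ you would need an anisotropic version of the Dai et al.\ result, or to recompute the Hessian bounds as the paper does. Your identification of the $t\to 0^+$ extension as the main technical obstacle, and the continuity-equation/stochastic-interpolant argument for $X_1\sim\nu$, are both aligned with the paper's actual reasoning.
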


\begin{remark}[Translation]
Various selections of $\mu$ correspond to spatial translations of the F\"{o}llmer flow and are, therefore, of minimal impact on the numerical performance.
\end{remark}

\section{Numerical scheme}
\label{sec:scheme}

We introduce several numerical schemes for F\"{o}llmer flow in this section.

\subsection{Velocity field}
We derive several advantageous equivalent representations of the velocity field, which enhance the numerical simulations.

\begin{definition}[Heat semigroup]
  Define an operator $\left(\mathcal{Q}_{t}\right)_{t \in [0, 1]}$, acting on function $r: \mathbb{R}^{d} \rightarrow \mathbb{R}$ by
  \begin{equation*}
    \begin{aligned}
      \mathcal{Q}_{1-t} r \left( x \right) & := \int_{\mathbb{R}^{d}} \varphi^{tx + (1-t)\mu, (1-t^2)\Sigma}\left( y \right) r\left( y \right) \mathrm{d}y \\
                                           & = \int_{\mathbb{R}^{d}} r\left( tx + (1-t)\mu + \sqrt{1-t^{2}} Az \right) \mathrm{d}\gamma_d \left(z\right).
    \end{aligned}
  \end{equation*}
\end{definition}

\begin{remark}
  Notice that
  \[
    \mathcal{Q}_{1-t} r(x) = \frac{1}{\varphi^{\mu, \Sigma}(x)} \times \int_{\mathbb{R}^d} \varphi^{ty+(1-t)\mu, (1-t^2)\Sigma}(x)p(y) \mathrm{d}y.
  \]
  Hence, we obtain
  \[
    x - \mu + \Sigma S(t, x) = \Sigma \nabla \ln \mathcal{Q}_{1-t} r(x), \quad \forall t \in (0, 1].
  \]
  Therefore, the velocity field on time interval $ t \in (0, 1]$ can be interpreted as
  \begin{equation}
    \label{eq:v-in-q}
    V(t, x) = \frac{\Sigma \nabla \ln \mathcal{Q}_{1-t} r\left(x\right)}{t}.
  \end{equation}
\end{remark}

Notice that the operator $\mathcal{Q}$ admits property
\[
  \nabla \mathcal{Q}_{1-t} r(x) = t \mathcal{Q}_{1-t} \nabla r(x),
\]
then by direct calculation, the velocity field \cref{eq:v-in-q} yields
\begin{equation}
  \label{eq:v-mc-raw}
  V(t, x) = \frac{\Sigma \mathcal{Q}_{1-t} \nabla r(x)}{\mathcal{Q}_{1-t} r(x)}, \quad t \in (0, 1].
\end{equation}

By Stein's lemma, we can avoid the calculation of $\nabla r$ in \cref{eq:v-mc-raw}, that is
\begin{equation}
  \label{eq:stein}
  V(t, x) = \frac{A \int_{\mathbb{R}^{d}} z r(z) \mathrm{d} \gamma^{tx+(1-t)\mu, (1-t^2)\Sigma}(z)}{\sqrt{1-t^2} \int_{\mathbb{R}^{d}} r(z) \mathrm{d} \gamma^{tx+(1-t)\mu, (1-t^2)\Sigma}(z)}, \quad t \in (0, 1).
\end{equation}

Since the velocity $V(t, x)$ is scale-invariant with respect to $r$, the F\"{o}llmer flow can be used for sampling from target measure taking form
\begin{equation*}
  \nu \left( \mathrm{d}x \right) = \frac{1}{C} \exp \left( - U \left( x \right) \right) \mathrm{d}x,
\end{equation*}
where $C$ may be unknown.

In general, the F\"{o}llmer velocity $V(t, x)$ stated in \cref{eq:stein} does not have a closed-form expression.
But fortunately, it is compatible with Monte Carlo approximations.
Let $Z_{1}, \dots, Z_{M}$ be i.i.d. $N\left( \textbf{0}, \mathrm{I}_{d} \right)$, where $m$ is sufficiently large. We can approximate $V(t, x)$ by
\begin{equation}
  \label{eq:mc-velocity}
  \tilde{V}\left( t, x \right) := \frac{A \sum_{j=1}^{M} \left[ Z_{j} r \left( tx + (1-t)\mu +
  \sqrt{1-t^2}AZ_{j} \right) \right]}{\sqrt{1-t^2} \cdot \sum_{j=1}^{M}\left[ r \left( tx + (1-t)\mu +  \sqrt{1-t^2}AZ_{j} \right) \right] }
\end{equation}

Vargas et al. \cite{vargas2022bayesian} observe that the term $r$ involves the product of density functions evaluated at samples, rather than a log product, and is thus prone to numerical instability. 
They introduce a stable implementation of \cref{eq:mc-velocity} by leveraging the logsumexp technique and properties of the Lebesgue integral. 
We employ the logsumexp reformulation, using similar techniques, in our numerical studies.

\subsection{Euler's method}
\Cref{thm:well-pose} indicates that we can initiate the process with $x_{0} \sim \gamma^{\mu, \Sigma}$ and update the values of $\left\{ x_{t}: 0 < t \leq 1\right\}$ according to the continuous-time "{o}llmer flow, as described by \cref{eq:reverse-ivp}. In this manner, the distribution of $x_{1}$ precisely converges to $\nu$.
For the discretization of the ODE system in \cref{eq:reverse-ivp}, we employ Euler's method with a fixed step size. To circumvent potential numerical instability issues at $t=0$ and $t=1$, we introduce a truncation of the unit-time interval by $\varepsilon$ at both endpoints.

Let
\[
  t_{k} = \varepsilon + k h,\  k=0, 1, \dots, K, \ \text{with} \  h = (1-2\varepsilon)/K,
\]
and draw $x_{0}$ from $\gamma^{\mu, \Sigma}$. Then \cref{eq:reverse-ivp} using Euler's method has the form
\begin{equation}
  \label{eq:ode-euler}
  x_{t_{k+1}} = x_{t_{k}} + h V\left(t_{k}, x_{t_{k}}\right), \ k=0, 1, \dots, K-1.
\end{equation}

The pseudocode for implementing \cref{eq:ode-euler} is presented in \cref{alg:close}.
\begin{algorithm}
  \caption{F\"{o}llmer flow for $\nu = \exp\left( -U \left( x \right) \right) / C$}
  \label{alg:close}
  \begin{algorithmic}
    \REQUIRE{$U\left( x \right), K$. Initialize $h = 1/K, x_{t_0} \sim \gamma^{\mu, \Sigma}$.}
    \FOR{$k=0, 1, \dots, K-1$}
    \STATE{Compute the velocity $V\left( t_{k}, x_{t_{k}} \right)$ by \cref{eq:v-in-s} or \cref{eq:v-in-q}, }
    \STATE{Update $x_{t_{k+1}} = x_{t_{k}} + hV\left( t_{k}, x_{t_{k}}\right)$. }
    \ENDFOR
    \ENSURE{$\{ x_{t_{k}} \}_{k=1}^{K} $.}
  \end{algorithmic}
\end{algorithm}

\section{Error analysis}
\label{sec:error-analysis}
In this section we derive the error analysis of the F\"{o}llmer flow. We use $\tilde{X}_{t_k}$ to bridge between the continuous version $X_t$ and the Monte Carlo version $\tilde{Y}_{t_k}$.
Without loss of generality, we assume that $\mu = \mathbf{0}$ and $\Sigma = \mathrm{I}_d$ in the following analysis.
In our setting, $X_0, \tilde{X}_{\varepsilon}$ and $\tilde{Y}_{\varepsilon}$ are drawn from $\gamma^{\mu,\Sigma}$.
Our interest lies in the distance between $X_1$ and $\tilde{Y}_{1-\varepsilon}$.

We first address the following assumptions on the density ratio $r$.

\begin{condition}
  \label{cond:lip}
  $r$ and $\nabla r$ are Lipschitz continuous with constant $\gamma$.
\end{condition}

\begin{condition}
  \label{cond:bound}
  There exists $\xi>0$ such that $r \geq \xi$.
\end{condition}

We present our main theorem here. Details of proof are given at \cref{app:analysis}.
\begin{theorem}
  \label{thm:main}
  Suppose \cref{cond:cond1,cond:cond2,cond:cond3,cond:lip,cond:bound} hold,
  by choosing the truncation step to be $\mathcal{O} \left( \sqrt[3]{\frac{d}{K^2}} \right)$, the error of the Monte Carlo F\"{o}llmer flow using Euler's method is given by
  \[
    \mathcal{W}_2 (\law(\tilde{Y}_{1-\varepsilon}), \law(X_1)) \leq \mathcal{O} \left( \sqrt[3]{\frac{d}{K^2}} \right) + \mathcal{O} \left( \frac{d}{KM} \right).
  \]
\end{theorem}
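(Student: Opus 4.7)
The plan is to split the total Wasserstein error via the triangle inequality into three independently tractable pieces, handle each by coupling the relevant processes through their shared Gaussian initialization, and then optimize the truncation parameter $\varepsilon$ to balance the dominant contributions. Concretely, I would write
\begin{equation*}
\mathcal{W}_2(\law(\tilde{Y}_{1-\varepsilon}), \law(X_1)) \;\leq\; \underbrace{\mathcal{W}_2(\law(X_1), \law(X_{1-\varepsilon}))}_{\mathcal{E}_{\text{trunc}}} + \underbrace{\mathcal{W}_2(\law(X_{1-\varepsilon}), \law(\tilde X_{1-\varepsilon}))}_{\mathcal{E}_{\text{Euler}}} + \underbrace{\mathcal{W}_2(\law(\tilde X_{1-\varepsilon}), \law(\tilde Y_{1-\varepsilon}))}_{\mathcal{E}_{\text{MC}}},
\end{equation*}
with $\tilde X$ denoting the exact-velocity Euler trajectory initialized from $\gamma^{\mu,\Sigma}$ at time $\varepsilon$, and $\tilde Y$ the Monte Carlo version. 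This isolates the three approximations (continuous-time truncation, Euler discretization, MC estimation of $V$) and lets me estimate each by an $L^2$ coupling argument.

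For $\mathcal{E}_{\text{trunc}}$ I would couple $X_1$ and $X_{1-\varepsilon}$ on the same continuous trajectory and bound $|X_1 - X_{1-\varepsilon}| \leq \int_{1-\varepsilon}^1 |V(s,X_s)|\,\mathrm ds$ using the Lipschitz constant $\theta_s$ from \cref{rmk:lip-x1} together with the second-moment control on $X_s$ inherited from the stochastic interpolation. An analogous bound handles the left-endpoint initialization mismatch $|X_\varepsilon - \tilde X_\varepsilon|$ since $V(0,\cdot) = \mathbb{E}_\nu[X] - \mu$ is finite. For $\mathcal{E}_{\text{Euler}}$ I would apply the standard one-step error bound to the ODE \cref{eq:reverse-ivp} and close it with a discrete Grönwall inequality whose amplification factor is $\exp\!\big(\int_\varepsilon^{1-\varepsilon}\theta_s\,\mathrm ds\big)$; the singular behaviour of $\theta_s$ near the endpoints is absorbed into a power of $1/\varepsilon$, so that $\mathcal{E}_{\text{Euler}}$ scales like $h$ up to this endpoint factor. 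For $\mathcal{E}_{\text{MC}}$ I would couple $\tilde X$ and $\tilde Y$ by reusing the same Gaussian initialization and the same MC samples $Z_1,\dots,Z_M$ at each step, and then use \cref{cond:lip,cond:bound} to control both the numerator variance and the denominator ratio in \cref{eq:mc-velocity}; this gives a per-step variance of order $d/M$, which, propagated through the $K$ independent stages and the Lipschitz flow map, collapses to the claimed $d/(KM)$.

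Finally, once the two $\varepsilon$-dependent deterministic pieces are written out, say as $\mathcal{E}_{\text{trunc}} \lesssim \varepsilon\sqrt{d}$ and $\mathcal{E}_{\text{Euler}} \lesssim \sqrt{d}/(K\varepsilon^{2})$ up to constants from \cref{cond:cond1,cond:cond2,cond:cond3}, the optimum $\varepsilon \asymp (d/K^2)^{1/3}$ falls out of a one-line calculus, yielding the advertised $\mathcal{O}\big(\sqrt[3]{d/K^2}\big)$ rate and leaving the MC term $\mathcal{O}(d/(KM))$ separate because it does not involve $\varepsilon$.

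The hard part, and where the delegated computations in \cref{app:analysis} do the real work, is the careful tracking of $\theta_s$ in $\mathcal{E}_{\text{Euler}}$: the velocity $V(t,x) = (x-\mu + \Sigma S(t,x))/t$ carries the explicit $1/t$ singularity at $t=0$ and an implicit blow-up of $S(t,x)$ as $t\to 1$ whenever $\nu$ is concentrated, and these are precisely what forces the bilateral truncation by $\varepsilon$. Establishing the correct power of $\varepsilon$ in the endpoint factor, and doing so with dimension dependence that is no worse than $\sqrt d$, is the delicate step; \cref{cond:lip,cond:bound} are exactly the hypotheses that allow the ratio defining $\tilde V$ to be controlled uniformly on $[\varepsilon,1-\varepsilon]$ and thus make the coupling in $\mathcal{E}_{\text{MC}}$ quantitative.
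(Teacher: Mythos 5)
Your high-level strategy is exactly the one the paper uses: split $\mathcal{W}_2(\law(\tilde Y_{1-\varepsilon}),\law(X_1))$ by the triangle inequality into a truncation piece $X_1 \to X_{1-\varepsilon}$, an Euler piece $X_{1-\varepsilon}\to \tilde X_{1-\varepsilon}$, and a Monte Carlo piece $\tilde X_{1-\varepsilon}\to \tilde Y_{1-\varepsilon}$, couple through the common Gaussian initialization, and optimize over $\varepsilon$. The MC piece and its $d/(KM)$ scaling are also treated as in the paper. However, the intermediate scalings you write down are both wrong and in fact internally inconsistent with the optimization you then claim.

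Concretely, you assert $\mathcal{E}_{\text{trunc}} \lesssim \varepsilon\sqrt d$ and $\mathcal{E}_{\text{Euler}} \lesssim \sqrt d/(K\varepsilon^2)$, and then say the balance gives $\varepsilon \asymp (d/K^2)^{1/3}$. Balancing your two expressions actually gives $\varepsilon^3 K = 1$, i.e.\ $\varepsilon \asymp K^{-1/3}$ with a residual error $\sqrt d/K^{1/3}$, which is not the claimed $d^{1/3}/K^{2/3}$. The paper instead obtains (in squared-$L^2$ form) a truncation error of order $\varepsilon$ with a constant $C^2=(\gamma/\xi)^2$ that is dimension-free under \cref{cond:lip,cond:bound} (\cref{thm:trunc}), and an Euler error of order $\varepsilon + \varepsilon^{-2} d h^2$ with $h\asymp 1/K$ (\cref{thm:euler}). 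Balancing $\varepsilon$ against $d/(\varepsilon^2 K^2)$ is what produces $\varepsilon \asymp (d/K^2)^{1/3}$, and you cannot reach this with the $\sqrt d$ powers or the single power of $K$ in your proposed bounds.

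The source of the $\varepsilon$-dependence is also misattributed. You say it comes from the Gr\"onwall amplification factor $\exp(\int_\varepsilon^{1-\varepsilon}\theta_s\,\mathrm ds)$. But under \cref{cond:lip,cond:bound}, the paper's \cref{lem:der-v} shows the \emph{spatial} Lipschitz constant of $V$ is uniformly bounded by $L = C+C^2$ over the whole interval, so the Gr\"onwall factor is a dimension-free constant $e^{\mathcal O(L^2)}$ independent of $\varepsilon$. The $\varepsilon^{-2}$ enters through the \emph{temporal} derivative bound $|\partial_t V(t,x)|^2 \le 4(C^2+C^4)(\|x\|_2^2 + \varepsilon^{-2}d)$, i.e.\ through the Taylor remainder of the one-step local truncation error, because $\sqrt{1-t^2}$ degenerates near $t=1$. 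This distinction matters: if the Gr\"onwall constant truly blew up with $\varepsilon$ you would get an exponential, not polynomial, dependence on $\varepsilon^{-1}$, and the whole optimization would be different. Reworking your two deterministic pieces with the correct sources --- dimension-free $\mathcal O(\varepsilon)$ truncation from the uniform velocity bound $\|V\|_2 \le C$, and $\mathcal O(\varepsilon^{-2}dh^2)$ Euler error from the $\partial_t V$ bound propagated through a bounded Gr\"onwall factor --- is what makes the $\varepsilon = (d/K^2)^{1/3}$ choice fall out correctly.
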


\begin{corollary}
  \label{cor:cor1}
  Suppose \cref{cond:cond1,cond:cond2,cond:cond3,cond:lip,cond:bound} hold,
  by choosing the truncation step to be $ \mathcal{O} \left( \sqrt[3]{\frac{d}{K^2}} \right) $ and $ M = \mathcal{O} \left( \sqrt[3]{\frac{d^2}{K}} \right) $,
  the error of the Monte Carlo F\"{o}llmer flow using Euler's method is given by
  \[
    \mathcal{W}_2 (\law(\tilde{Y}_{1-\varepsilon}), \law(X_1)) \leq \mathcal{O} \left( \sqrt[3]{\frac{d}{K^2}} \right).
  \]
\end{corollary}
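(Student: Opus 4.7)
The plan is to derive the corollary as a direct consequence of the main theorem (\cref{thm:main}). Since \cref{thm:main} already establishes the two-term bound
\[
\mathcal{W}_2(\law(\tilde{Y}_{1-\varepsilon}), \law(X_1)) \leq \mathcal{O}\!\left(\sqrt[3]{d/K^2}\right) + \mathcal{O}\!\left(d/(KM)\right),
\]
once the same choice of truncation step $\varepsilon = \mathcal{O}(\sqrt[3]{d/K^2})$ is made, all that remains is to tune $M$ so that the Monte Carlo contribution is absorbed into the discretization contribution.

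Concretely, the first step is to invoke \cref{thm:main} verbatim, which is legitimate because \cref{cond:cond1,cond:cond2,cond:cond3,cond:lip,cond:bound} are assumed in the hypotheses of the corollary as well. Next I would balance the two error terms by imposing $d/(KM) \lesssim \sqrt[3]{d/K^2}$ and solving for $M$, which gives
\[
M \gtrsim \frac{d}{K \cdot \sqrt[3]{d/K^2}} = d \cdot K^{-1} \cdot d^{-1/3} \cdot K^{2/3} = d^{2/3} K^{-1/3} = \sqrt[3]{d^2/K}.
\]
With the prescribed $M = \mathcal{O}(\sqrt[3]{d^2/K})$, the Monte Carlo term $\mathcal{O}(d/(KM))$ becomes $\mathcal{O}(\sqrt[3]{d/K^2})$, matching the discretization term up to a constant.

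The final step is simply to add the two $\mathcal{O}(\sqrt[3]{d/K^2})$ terms and note that the sum remains of the same order, yielding the claimed bound
\[
\mathcal{W}_2(\law(\tilde{Y}_{1-\varepsilon}), \law(X_1)) \leq \mathcal{O}\!\left(\sqrt[3]{d/K^2}\right).
\]
There is no real obstacle here: the proof is an elementary algebraic balancing argument, and all the analytic content (the dependence on $d$, $K$, $M$, and the choice of $\varepsilon$) has already been absorbed into \cref{thm:main}. The only thing to be careful about is that the constants hidden in the big-$\mathcal{O}$ for $M$ may depend on the parameters $\beta$, $\kappa$, $\gamma$, $\xi$, and $D$ from \cref{cond:cond2,cond:cond3,cond:lip,cond:bound}, but since these are fixed by the problem data, they do not affect the stated rate.
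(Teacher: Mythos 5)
Your proof is correct and takes the same route the paper implicitly relies on: the corollary follows immediately from \cref{thm:main} by balancing the Monte Carlo term $\mathcal{O}(d/(KM))$ against the discretization term $\mathcal{O}(\sqrt[3]{d/K^2})$, and your algebra solving for $M \gtrsim \sqrt[3]{d^2/K}$ is right. The paper gives no separate proof of the corollary, and your one-line balancing argument is exactly what is intended.
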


\Cref{thm:main} indicates that for appropriate truncation step size $\varepsilon$,
the overall error tends to zero as the number of Monte Carlo simulations $M$ and the number of temporal grid $K$ tend to infinity.
\Cref{cor:cor1} further indicates that for appropriate $M$, the overall error tends to zero as $K$ tends to infinity.

\section{Numerical experiments}
\label{sec:experiments}
In this section, we undertake numerical experiments to assess the performance of the F\"{o}llmer flow. 
We employ Markov Chain Monte Carlo (MCMC) methods, including the Metropolis-Hastings algorithm (MH), the tamed Metropolis-adjusted Langevin algorithm (MALA), and the tamed unadjusted Langevin algorithm (ULA), using 50 chains for the purpose of comparison. 
For reference, our code is accessible online\footnote{\url{https://github.com/burning489/SamplingFollmerFlow}}.

\subsection*{Gaussian mixture distributions}
We first derive the F\"{o}llmer flow for Gaussian mixture distributions.

Assume that the target distribution $\nu$ is a Gaussian mixture
\begin{equation}
  \label{eq:gm-def}
  \nu = \sum_{i=1}^\kappa \theta_i \gamma^{\mu_i, \Sigma_i}, \sum_{i=1}^\kappa \theta_i = 1 \text{ and } \theta_i \in [0, 1], i=1, \cdots, \kappa,
\end{equation}
where $\kappa$ is the number of mixture components, $\gamma^{\mu_i, \Sigma_i}$ is the $i$-th Gaussian component
with mean $\mu_i \in \mathbb{R}^{d}$ and covariance matrix $\Sigma_i \in \mathbb{R}^{d \times d}$.
Obviously, the target distribution $\nu$ is absolutely continuous with respect to the $d$-dimensional Gaussian distribution $\gamma^{\mu, \Sigma}$.
The density ratio is
\begin{equation}\label{eq:rnd-gm}
  r =  \frac{1}{\gamma^{\mu, \Sigma}} \sum_{i=1}^\kappa\theta_i \gamma^{\mu_i, \Sigma_i}.
\end{equation}

\subsection*{Closed-form}
By Substituting \cref{eq:gm-def} into \cref{eq:v-in-s}, we obtain the F\"{o}llmer velocity for Gaussian mixtures
\begin{equation*}
  \begin{aligned}
    V(t, x) & = \frac{1}{t} \left( x - \mu + \Sigma \frac{\sum_{i=1}^{\kappa}
    \theta_i \nabla p_i(x)}{\sum_{i=1}^{\kappa} \theta_i p_i(x)}\right)                \\
            & = \frac{1}{t} \left( x - \mu + \Sigma \frac{\sum_{i=1}^{\kappa} \theta_i
        p_i(x) \left[ t^2 \Sigma_i + (1-t^2)\Sigma \right]^{-1} \left[ t\mu_i +
          (1-t)\mu - x\right]}{\sum_{i=1}^{\kappa} \theta_i p_i(x)}\right) ,
  \end{aligned}
\end{equation*}
where $p_i$ is density function of $N(t\mu_i + (1-t)\mu, t^2\Sigma_i + (1-t^2)\Sigma)$.

\subsection*{Stable Monte Carlo form}
We can obtain a stable Monte Carlo expression of F\"{o}llmer velocity by directly plugging the Radon-Nikodym derivative \cref{eq:rnd-gm} into \cref{eq:mc-velocity}.

\subsection{One-dimensional Gaussian mixture distribution}
\label{subsec:1d}
We employ the proposed F\"{o}llmer flow alongside other methods to generate 10,000 samples from three one-dimensional Gaussian mixture distributions (as detailed in \cref{table:example}, examples 1 to 3). 
We assess the quality of the samples by computing the sampling errors, utilizing the Wasserstein-2 distance and the maximum mean discrepancy (MMD) \cite{gretton2006kernel} between the generated samples and the ground truth.
In accordance with the methodology outlined in \cite{qiu2023efficient}, we adopt the adjusted Wasserstein distance and adjusted MMD, both of which can assume negative values. Smaller values of these metrics indicate higher sample quality. 
In \cref{fig:gm1d}, we present kernel density estimation curves for all methods in red, with the target density functions shaded in grey. 
We summarize the sampling errors in \cref{table:gm1d-mcmc-error}, revealing that the F\"{o}llmer flow consistently outperforms other methods.

\begin{figure}[htpb] \centering
  \includegraphics[width=\textwidth]{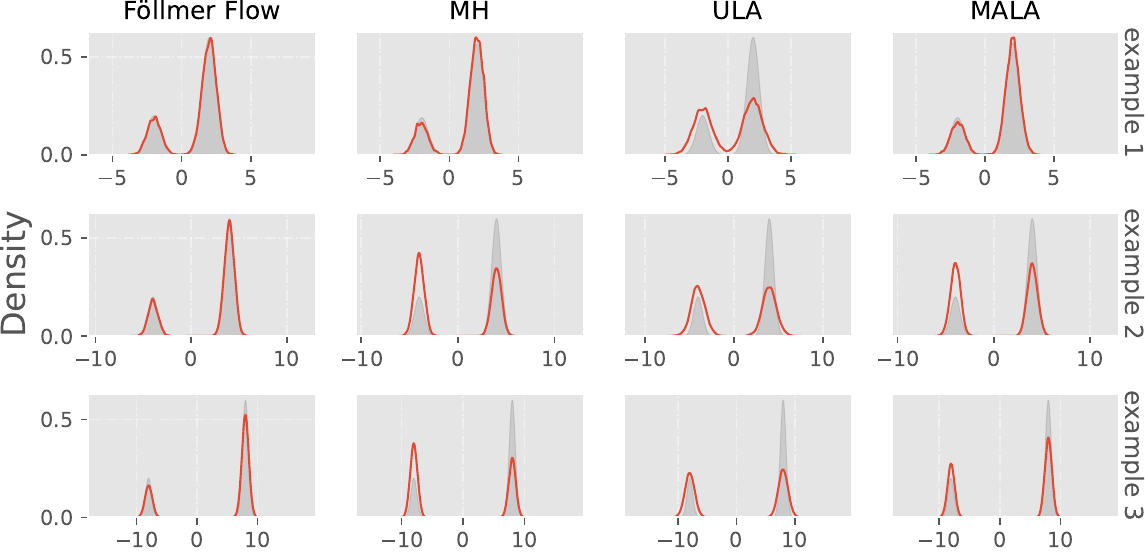}
  \caption{Kernel density estimation of 1D Gaussian mixture distributions example 1 to 3.}
  \label{fig:gm1d}
\end{figure}

\begin{table}[htb]
\caption{Adjusted 2-Wasserstein distance and adjusted MMD for samples generated by different methods for example 1 to 3.}
\label{table:gm1d-mcmc-error}
\resizebox*{\textwidth}{!}{
\begin{tabular}{lrrrrrrrr}
\toprule
example & \multicolumn{2}{|c|}{Föllmer Flow} & \multicolumn{2}{|c|}{MH\_50} & \multicolumn{2}{|c|}{tULA\_50} & \multicolumn{2}{|c|}{tMALA\_50} \\
 & adj. W & adj. MMD & adj. W & adj. MMD & adj. W & adj. MMD & adj. W & adj. MMD \\
\midrule
example 1 & \bfseries -0.001 & \bfseries -0.000 & 0.121 & 0.025 & 0.790 & 0.613 & 0.068 & 0.013 \\
example 2 & \bfseries 0.056 & \bfseries 0.005 & 2.296 & 5.372 & 2.028 & 3.924 & 1.957 & 3.916 \\
example 3 & \bfseries 0.129 & \bfseries 0.041 & 4.888 & 26.057 & 3.668 & 14.622 & 2.312 & 6.372 \\
\bottomrule
\end{tabular}
}
\end{table}

We depict the kernel density estimation curves for all methods across the various target distributions in \cref{fig:gm1d}. The target density functions are highlighted in grey for reference.
In situations where the centroids of the Gaussian components are close, the proposed F\"{o}llmer flow, Metropolis-Hastings (MH), and the tamed Metropolis-adjusted Langevin algorithm (MALA) demonstrate comparable performance. However, the tamed unadjusted Langevin algorithm (ULA) fails to capture the variance information adequately.
Conversely, when the centroids of the Gaussians move apart from each other, only samples generated using the F\"{o}llmer flow accurately represent the underlying target distribution, while all other methods exhibit a decrease in accuracy.

\subsection{Two-dimensional Gaussian mixture distribution}
\label{subsec:2d-closed}
We consider seven instances of two-dimensional Gaussian mixture distributions, as described in appendix \cref{table:example}, examples 4 to 10. 
In examples 4 and 5, the centroids of the Gaussian components form a circle, while for examples 6 to 9, they are arranged in a square matrix. 
Example 10 comprises four anisotropic Gaussian components forming a square configuration.
To evaluate these distributions, we employ the proposed F\"{o}llmer flow and other sampling methods to generate 20,000 samples, visualizing the results with kernel density estimation in \cref{fig:gm2d-cmp}. 
As depicted in \cref{fig:gm2d-cmp}, only the samples generated using the closed-form F\"{o}llmer flow accurately estimate the underlying target distribution. 
In contrast, Metropolis-Hastings (MH) and the tamed Metropolis-adjusted Langevin algorithm (MALA) tend to collapse onto one or a few modes when the distributions become more challenging to sample from. The tamed unadjusted Langevin algorithm (ULA) can capture all modes but exhibits worse variance.
The sampling error is reported in \cref{table:gm1d-mcmc-error}, demonstrating that the family of F\"{o}llmer flow samplers consistently outperforms other methods across all examples.

\begin{figure}[htpb] \centering
  \includegraphics[width=\textwidth]{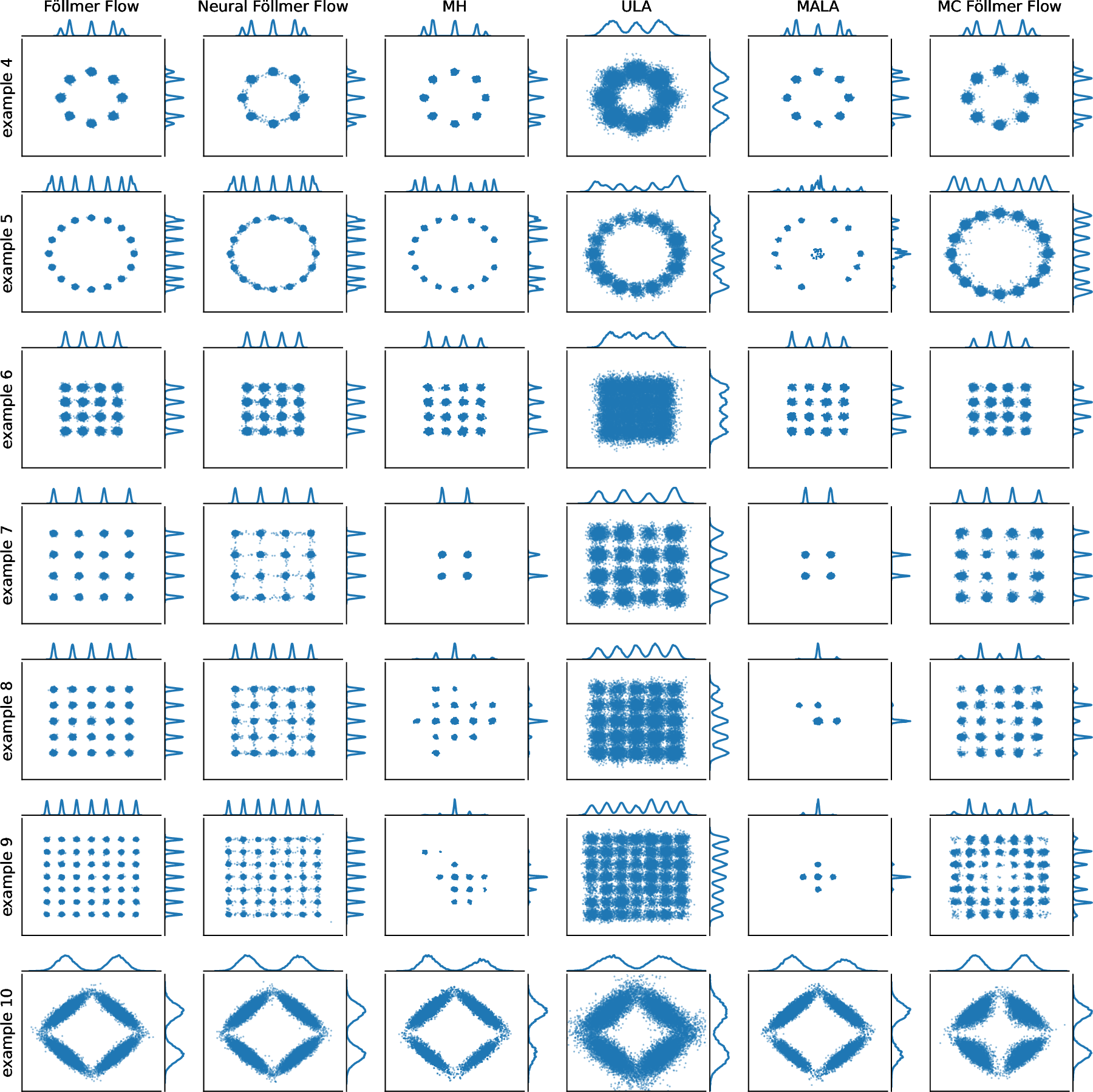}
  \caption{Kernel density estimation with marginal distribution plots for example 4 to 10.}
  \label{fig:gm2d-cmp}
\end{figure}

\begin{table}[htb]
\caption{Adjusted 2-Wasserstein distance and adjusted MMD for samples generated by different methods for example 4 to 9.}
\label{table:gm2d-mcmc-error}
\resizebox*{\textwidth}{!}{
\begin{tabular}{lrrrrrrrrrrrr}
\toprule
example & \multicolumn{2}{|c|}{F\"ollmer Flow} & \multicolumn{2}{|c|}{Neural F\"ollmer Flow} & \multicolumn{2}{|c|}{MC F\"ollmer Flow} & \multicolumn{2}{|c|}{MH\_50} & \multicolumn{2}{|c|}{tULA\_50} & \multicolumn{2}{|c|}{tMALA\_50} \\
 & adj. W & adj. MMD & adj. W & adj. MMD & adj. W & adj. MMD & adj. W & adj. MMD & adj. W & adj. MMD & adj. W & adj. MMD \\
\midrule
example 4 & \bfseries -0.024 & \bfseries -0.002 & -0.020 & -0.001 & 0.182 & -0.000 & 0.670 & 0.515 & 0.583 & 0.008 & 0.436 & 0.128 \\
example 5 & \bfseries -0.081 & 0.000 & -0.035 & 0.004 & 0.893 & \bfseries -0.001 & 0.665 & 0.115 & 1.670 & 2.127 & 4.018 & 0.673 \\
example 6 & \bfseries -0.039 & -0.002 & -0.032 & -0.001 & 0.260 & \bfseries -0.004 & 0.460 & 0.257 & 0.429 & 0.005 & 0.502 & 0.177 \\
example 7 & -0.027 & -0.013 & \bfseries -0.035 & \bfseries -0.018 & 0.710 & -0.007 & 3.178 & 0.113 & 0.807 & 0.243 & 3.106 & -0.017 \\
example 8 & \bfseries -0.023 & -0.006 & -0.014 & \bfseries -0.007 & 1.089 & 0.003 & 2.942 & -0.007 & 1.030 & 0.845 & 4.491 & 0.115 \\
example 9 & \bfseries -0.063 & 0.005 & -0.044 & 0.005 & 0.994 & 
\bfseries -0.000 & 5.515 & 0.617 & 1.265 & 0.845 & 6.389 & 0.117 \\
example 10 & -0.033 & -0.002 & \bfseries -0.038 & \bfseries -0.004 & 0.178 & 0.008 & 1.192 & 1.781 & 0.429 & 0.011 & 0.876 & 0.910 \\
\bottomrule
\end{tabular}
}
\end{table}

\subsection{Preconditioned F\"{o}llmer flow}
\label{subsec:precondition}
We investigate the process of sampling from example 7 using the Monte Carlo F\"ollmer flow. 
It is evident that the selection of a suitable preconditioner, denoted as $\Sigma$, significantly impacts sample quality. 
We initialize the covariance matrix of the initial Gaussian measure as $\Sigma = \sigma^2 \mathrm{I}_2$, where $\sigma$ varies between 1.2 and 3.0 with increments of 0.4. 
For the sake of simplicity, we fix the mean of the initial Gaussian measure at $\mu = \mathbf{0}$.

We generate 20,000 samples for each preconditioner. 
The effects of different choices for $\Sigma$ are presented in \cref{fig:gm2d-precondition}. 
Lower variance settings result in a loss of outer modes, while higher variance settings lead to the loss of inner modes. 
Consequently, an appropriate preconditioner enhances the performance of the sampler. 
The guiding principle for selecting the variance is to align it closely with that of the target distribution.

\begin{figure}[htpb] \centering
  \includegraphics[width=\textwidth]{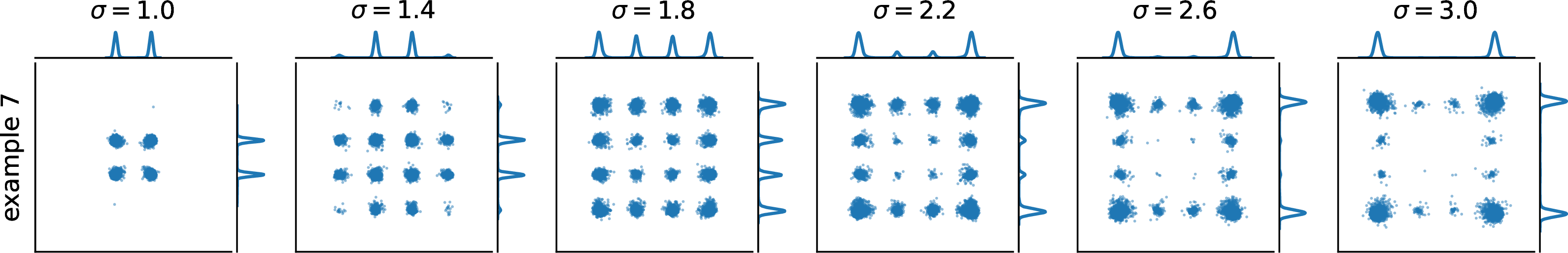}
  \caption{Kernel density estimation of preconditioned F\"{o}llmer flow samples with marginal distribution plots for example 7.}
  \label{fig:gm2d-precondition}
\end{figure}

\subsection{MC F\"{o}llmer flow}
\label{subsec:mc}
In general, computing the closed-form F\"{o}llmer velocity can be a challenging task, whereas the Monte Carlo version presented in \cref{eq:mc-velocity} is a more practical and manageable alternative for implementation.

\subsubsection{On 2-dimensional mixtures}
\label{subsubsec:2d-mc}
We conducted experiments with the Monte Carlo F\"{o}llmer flow on example 4 to 10, as detailed in \cref{table:example}, generating 20,000 samples. 
The visualizations of kernel density estimation in the rightmost column of \cref{fig:gm2d-cmp} illustrate that the Monte Carlo F\"{o}llmer flow produces samples of relatively high quality. 
This outcome suggests that the Monte Carlo formulation is a practical approach and can be extended to handle more complex cases.

\subsubsection{On hybridizing with MCMC methods}
\label{subsubsec:hybrid}
The previous results have demonstrated that the proposed F\"{o}llmer flow is an effective sampler suitable for addressing multimodal problems. 
However, it is computationally more resource-intensive compared to classical MCMC methods. 
On the other hand, classical MCMC methods are faster but may encounter mode losses.
To tackle this challenge, we explore the idea of hybridizing the F\"{o}llmer flow with existing MCMC methods. 
We use a small number of samples generated by the F\"{o}llmer flow as the initial particles for MCMC samplers. 
In this setup, the F\"{o}llmer flow serves as a warm start sampler, while the MCMC methods play the primary role in generating samples. 
This predictor-corrector scheme enables us to leverage the high-quality sampling capabilities of the F\"{o}llmer flow and the speed of classical MCMC methods simultaneously.

In order to demonstrate this approach, we conducted an experiment on example 7, as detailed in appendix \cref{table:example}. 
The results are presented in \cref{fig:gm2d-hybrid}. 
For the sake of efficiency, we set $M=K=10$ for the Monte Carlo F\"{o}llmer flow.
In \cref{fig:gm2d-hybrid}, we compare the outcomes of generating 20,000 samples using the original MCMC method and the hybrid method. 
The results indicate that the F\"{o}llmer flow assists Metropolis-Hastings (MH) and the tamed Metropolis-adjusted Langevin algorithm (MALA) in capturing more modes. 
The tamed unadjusted Langevin algorithm (ULA) also benefits from the hybrid approach, as it results in a reduction in the variance of the generated samples compared to those produced by the original method. 
Remarkably, such improvements can be achieved with only a preliminary trial of the Monte Carlo F\"ollmer flow, utilizing a small number of MC samples and a sparse time grid.

\begin{figure}[htpb] \centering
  \includegraphics[width=.6\textwidth]{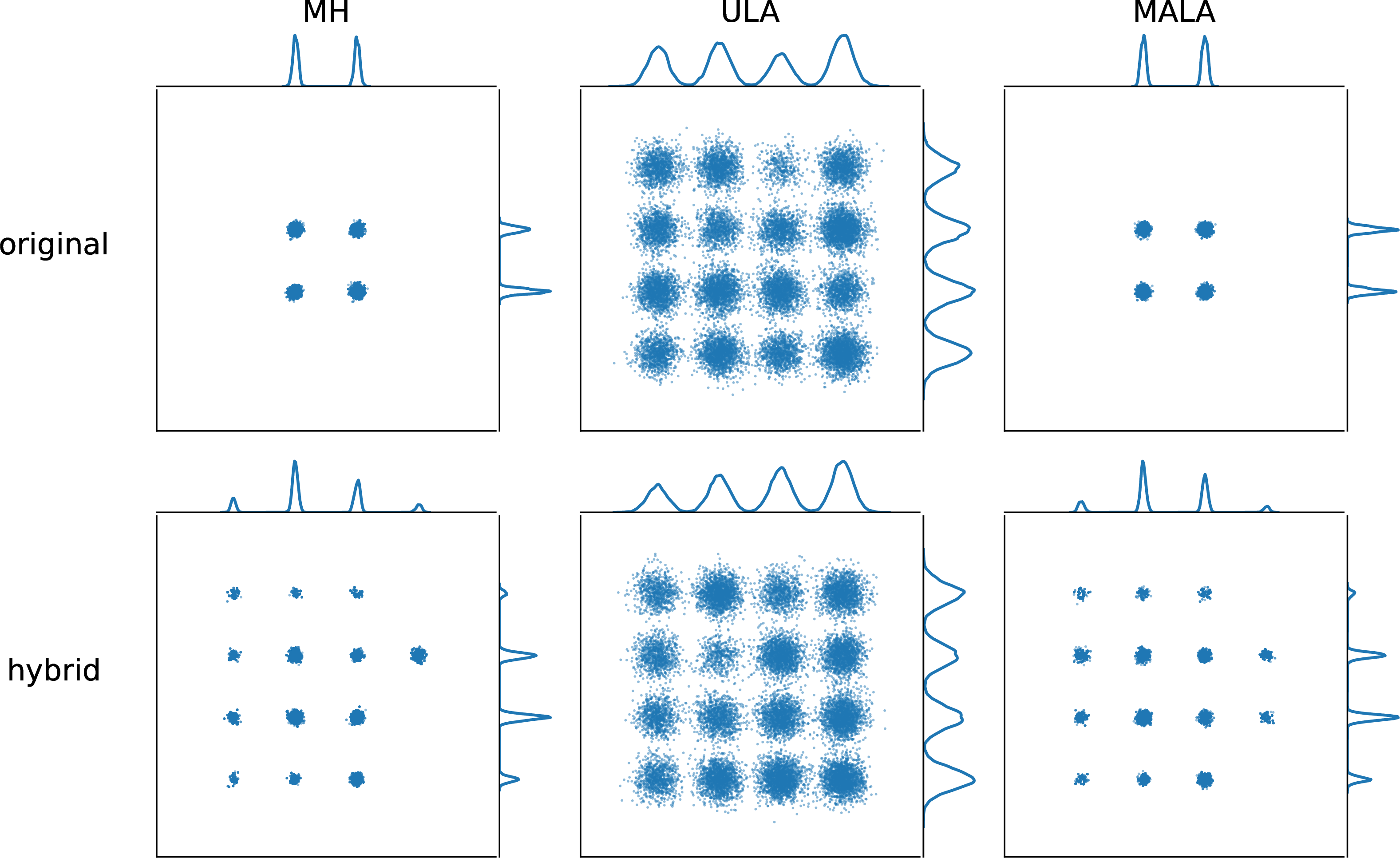}
  \caption{Kernel density estimation of Monte Carlo F\"{o}llmer flow samples with marginal distribution plots for the 16-mode square shaped Gaussian mixture distributions.
  Each column shows the samples generated by original MCMC method and hybrid version warm-stared by F\"{o}llmer flow.}
  \label{fig:gm2d-hybrid}
\end{figure}

\subsubsection{On higher dimensional mixtures}
\label{subsubsec:nd-mc}
We conducted experiments on a two-mode Gaussian mixture distribution, as described in \cref{table:example} example 11, with dimensions ranging from 1 to 10. 
It's worth noting that the score function $S(t, x)$ is related to the convolution distribution $[t \nu] * [(1-t^2) \varphi_d]$. 
As a consequence, the variance of the convoluted $\varphi_d$ decreases as time approaches 1. 
This makes estimating the velocity field via Monte Carlo approximation more challenging as time approaches 0.
To address this issue, we introduce a non-uniform time discretization by setting
\[ T_n = 1 - \exp\left(-T_u\right), \]
where $T_u$ is the uniform grid on the interval $[\varepsilon, +\infty)$. 
This adjustment helps alleviate the numerical instability around $t = \varepsilon$ and allows for more iterations around $t = 1-\varepsilon$.
We consider four test functions, denoted as $h (x)$, which include the first moment $h(x) = \alpha^\top x$, the second moment $h(x) = (\alpha^\top x)^2$, the moment generating function $h(x) = \exp(\alpha^\top x)$, and $h(x) = 5 \cos(\alpha^\top x)$, with $\alpha \in \mathbb{R}^d$ satisfying $\left\| \alpha \right\|_2 = 1$. 
These test functions serve as additional evaluation criteria for our experiments.

In the case of the Monte Carlo F\"{o}llmer flow, we set the number of Monte Carlo samples, denoted as $M$, to grow linearly with the dimension $d$, with the relationship $M=200d$. 
We generate 20,000 samples using each method.
Upon comparing the Monte Carlo estimates of $\mathbb{E}[h(X)]$ across different samplers and comparing them to the ground truth, we observe that the closed-form F\"{o}llmer flow consistently performs well and exhibits greater stability than other methods. 
While the Monte Carlo F\"{o}llmer flow is somewhat less stable and accurate compared to the closed-form version, it achieves performance similar to that of the MCMC methods. 
These results indicate that for the Monte Carlo F\"{o}llmer flow, the number of Monte Carlo samples required grows linearly with the dimension, suggesting that it does not suffer from the curse of dimensionality in this case.

\begin{figure}[htpb] \centering
  \includegraphics[width=\textwidth]{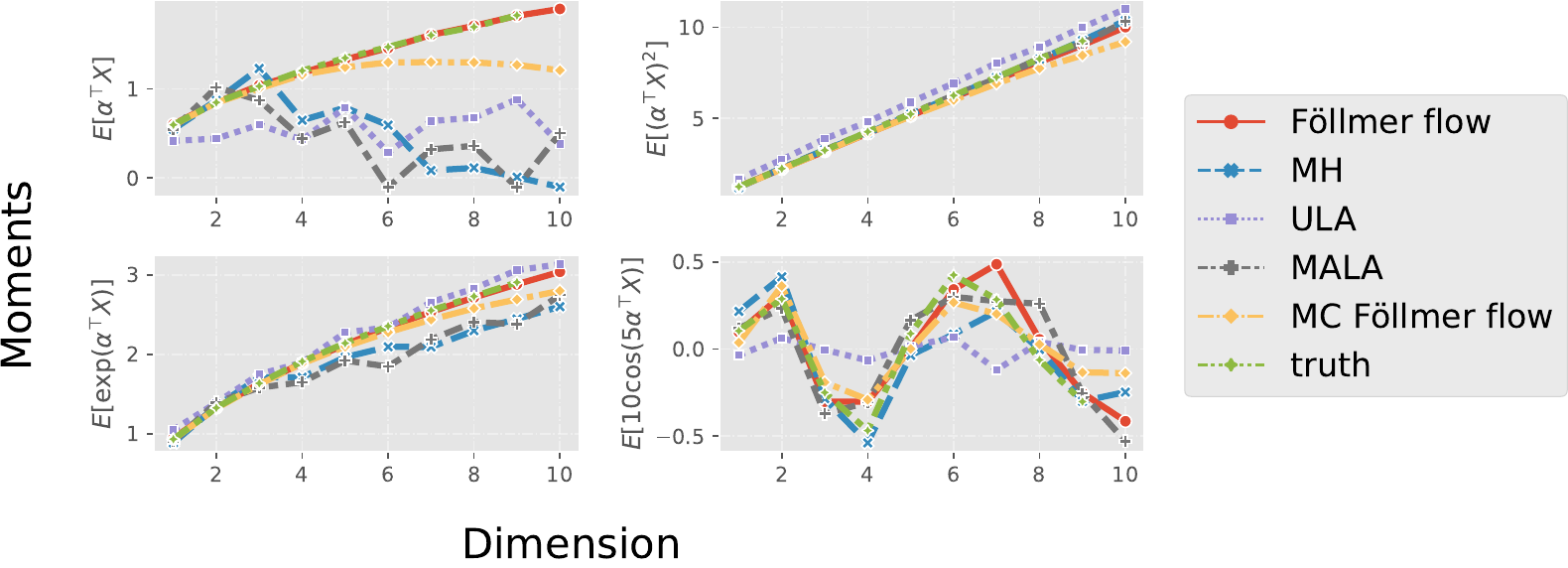}
  \caption{Monte Carlo estimates of $\mathbb{E} \left[ h \left( x \right) \right]$ versus $d$ for $d$-dimensional multivariate Gaussian mixture distributions of $X$,
    for d increasing from 1 to 10 with lag 1.} \label{fig:gmnd-moment}
\end{figure}

\subsection{Neural F\"{o}llmer flow}
\label{subsec:nff}
The SDE-based sampling method SFS, as described in \cite{huang2021schrodingerfollmer,jiao2021convergence}, attains sample quality similar to that of the proposed ODE flow. 
However, it's important to note that the F\"{o}llmer flow is deterministic for each particle and possesses a non-degenerate initial distribution.
This property enables the possibility of approximating the Lipschitz map of the ODE sampler $X_1$ using a deep neural network. 
Such an approximation could potentially enhance the efficiency and versatility of the F\"{o}llmer flow in practice.

Indeed, by approximating the Lipschitz map of the ODE sampler $X_1$ with a deep neural network, we can achieve significant computational cost savings. 
Recall that using Euler's method, the Monte Carlo F\"{o}llmer flow has a computational complexity of $\mathcal{O} \left( K M \# r \right)$, where $K$ represents the grid size, $M$ is the number of Monte Carlo samples, and $\# r$ is the cost of evaluating the RND $r$, which is linear in $d$ at best. 
As we refine the temporal grid, the computational cost becomes increasingly expensive.
In contrast, if we employ a deep neural network to approximate the map at $X_1$, the computational cost for generating one sample would be reduced to just a one-step evaluation of the network. 
This approach offers a significant enhancement in performance compared to the original scheme. 
Furthermore, the non-degenerate initial distribution ensures the richness of the generated samples, enhancing their quality.

We employ a vanilla ResNet to simulate the F\"{o}llmer flow for example 4 to 10. 
For each target distribution, we first draw $N=100,000$ samples $\{ \tilde{X}_{\varepsilon}^{(i)}\}_{i=1}^N$ from $\gamma^{\mu, \Sigma}$ and then generate corresponding F\"{o}llmer samples $\{ \tilde{X}_{1-\varepsilon}^{(i)}\}_{i=1}^N$.
Let $\mathcal F$ denote the ResNet class, and then the network is trained by minimizing the empirical risk:
\[
  \hat{f_{\theta}} \in \argmin_{f \in \mathcal F} \hat{\mathcal{L}}\left( f \right) := \sum_{i=1}^N \left\| f(\tilde{X}_{\varepsilon}^{(i)}) - \tilde{X}_{1-\varepsilon}^{(i)} \right\|.
\]
The scatter plot and marginal KDE plot of 20,000 neural F\"{o}llmer flow samples are presented in the second column of \cref{fig:gm2d-cmp}. 
These results reveal that the neural F\"{o}llmer flow achieves nearly the same sample quality as the training set, demonstrating the effectiveness of the approach.

\section{Conclusions}
\label{app:conclusions}
In summary, we have introduced a well-posed ODE flow known as the F\"{o}llmer flow, which effectively transports a Gaussian measure to the target distribution over a unit-time interval. 
This flow serves as a high-quality sampler, and our numerical results demonstrate its practicality in Monte Carlo simulations. 
We have also emphasized the importance of choosing an appropriate preconditioner for achieving optimal numerical results. 
Despite its quality, it is worth noting that the computational cost of the F\"{o}llmer flow is relatively high compared to classical MCMC methods. 
To address this issue, we have explored hybridizing our method with MCMC techniques, achieving a balance between sample quality and computational efficiency.
Additionally, we have shown that it is feasible to employ deep neural networks to simulate the F\"{o}llmer flow. 
This approach significantly reduces the computational cost to a single-step evaluation of the network while maintaining high sample quality.

Looking ahead, there are several avenues for future research. 
Investigating the impact of the temporal grid on the F\"{o}llmer flow's computational performance is an area of interest. 
We have implemented the F\"{o}llmer flow using Euler's method in this work, but exploring higher-order Runge-Kutta methods may lead to faster convergence. 
Furthermore, we have used a static step size in our work, but we empirically observe instability as time approaches 0. 
Adaptive step-size strategies may help mitigate this issue. 
Finally, conducting a convergence analysis for the neural F\"ollmer flow, especially for specific network classes, is a promising direction for further research.

\appendix

\section{Derivation and well-posedness}
\label{app:derivation}
\subsection{Diffusion process \texorpdfstring{\cite{follmer1986time,dai2023lipschitz}}{}}
For any $\varepsilon \in (0, 1)$, we consider a diffusion process $(\overline{X}_t)_{t \in [0, 1-\varepsilon]}$ defined by the following It\^{o} SDE
\begin{equation}
  \label{eq:sde}
  \mathrm{d} \overline{X}_t = -\frac{1}{1-t} (\overline{X}_t - \mu) \mathrm{d}t + \sqrt{\frac{2}{1-t}}A \mathrm{d}\overline{W}_t, \quad \overline{X}_0 \sim \nu, \quad t \in [0, 1-\varepsilon].
\end{equation}
The diffusion process defined in \cref{eq:sde} has a unique strong solution on $[0, 1-\varepsilon]$.
The transition probability distribution of \cref{eq:sde} from $\overline{X}_0$ to $\overline{X}_t$ is given by
\begin{equation*}
  \label{eq:transition}
  \overline{X}_t | \overline{X}_0 = x_0 \sim N( (1-t)x_0 + t \mu, t(2-t)\Sigma),
\end{equation*}
for every $t \in [0, 1-\varepsilon]$.

Note that the marginal distribution flow $(\overline{\mu}_t)_{t \in [0, 1-\varepsilon]}$ of the diffusion process \cref{eq:sde} satisfies the Fokker-Planck-Kolmogorov equation in an Eulerian framework
\begin{equation*}
  \partial_t \overline{\mu}_t = \nabla \cdot (\overline{\mu}_t V(1-t, x)) \quad \text{on } [0, 1-\varepsilon] \times \mathbb{R}^d, \overline{\mu}_0 = \nu,
\end{equation*}
in the sense that $\overline{\mu}_t$ is continuous in $t$ under the weak topology, where the velocity field is defined by
\begin{equation*}
  V(1-t, x) := \frac{1}{1-t}[x - \mu+ \Sigma S(1-t, x)], \quad t \in [0, 1-\varepsilon],
\end{equation*}
and
\begin{equation*}
  S(t, x) := \nabla \ln \int_{\mathbb{R}^d} \varphi^{ty + (1-t)\mu, (1-t^2)\Sigma}(x) p(y) \mathrm{d}y, \quad \forall t \in [\varepsilon, 1].
\end{equation*}
Due to the Cauchy-Lipschitz theory with smooth velocity, we shall define a flow $(X^*_t)_{t \in [0, 1-\varepsilon]}$ in a Lagrangian formulation via the following ODE system
\begin{equation*}
  \mathrm{d} X^*_t = -V(1-t, X^*_t)\mathrm{d}t, \quad X^*_0 \sim \nu, \quad t \in [0, 1-\varepsilon].
\end{equation*}

\begin{proposition}
  \label{prop-convergence}
  Assume that $V \in L^1([\varepsilon, 1]; W^{1, \infty}(\mathbb{R}^d; \mathbb{R}^d))$.
  Then the push-forward measure associated with the flow map $X_t^*$ satisfies $X_t^* := (1-t)X + t\mu + \sqrt{t(2-t)}AY$ with $X \sim \nu, Y \sim \gamma_d$.
  Moreover, the push-forward measure $\nu \circ (X_{1-\varepsilon}^*)^{-1}$ converges to the Gaussian measure $\gamma^{\mu, \Sigma}$ in the sense of Wasserstein-2 distance as $\varepsilon$ tends to zero,
  that is, $W_2(\nu \circ (X_{1-\varepsilon}^*)^{-1}, \gamma^{\mu, \Sigma}) \rightarrow 0$.
\end{proposition}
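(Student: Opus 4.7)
The plan is two-fold. First, identify the distribution of the ODE flow $X_t^*$ with the marginal law $\overline{\mu}_t$ of the diffusion \cref{eq:sde} via uniqueness of the continuity equation. Second, read off the explicit Gaussian form from the transition kernel of \cref{eq:sde} and conclude the $\mathcal{W}_2$ limit by a one-line coupling argument.

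For the first step, under the hypothesis $V \in L^1([\varepsilon,1]; W^{1,\infty}(\mathbb{R}^d;\mathbb{R}^d))$, the time-reversed velocity $(t,x)\mapsto -V(1-t,x)$ is spatially Lipschitz with a time-integrable Lipschitz constant on $[0,1-\varepsilon]$. Cauchy--Lipschitz then yields a unique global flow, and its push-forward $\mu_t^*:=\nu\circ(X_t^*)^{-1}$ is a distributional solution of the continuity equation $\partial_t\mu_t^* = \nabla\cdot(\mu_t^*\,V(1-t,\cdot))$ with $\mu_0^*=\nu$. The excerpt already records that $\overline{\mu}_t$ satisfies the same equation with the same initial datum. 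By the DiPerna--Lions/Ambrosio uniqueness theorem for continuity equations with spatially Lipschitz, time-integrable drift (equivalently, the superposition principle combined with uniqueness of ODE trajectories), the two measure-valued curves coincide, so $\mu_t^*=\overline{\mu}_t$ on $[0,1-\varepsilon]$.

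For the second step, the transition density displayed right after \cref{eq:sde} gives $\overline{X}_t\mid\overline{X}_0=x_0\sim N((1-t)x_0+t\mu,\,t(2-t)\Sigma)$; using $\Sigma=AA^\top$, this is precisely the law of $(1-t)\overline{X}_0 + t\mu + \sqrt{t(2-t)}\,AY$ for an auxiliary $Y\sim\gamma_d$ independent of $\overline{X}_0\sim\nu$, which, combined with the identification $\mu_t^*=\overline{\mu}_t$, proves the stated representation of $X_t^*$. Evaluating at $t=1-\varepsilon$ gives $X_{1-\varepsilon}^* \stackrel{d}{=} \varepsilon X + (1-\varepsilon)\mu + \sqrt{1-\varepsilon^2}\,AY$; sharing the same $Y$ to couple with $\mu+AY\sim\gamma^{\mu,\Sigma}$ and applying $|a+b|^2\le 2|a|^2+2|b|^2$ yields
\[
\mathcal{W}_2^{\,2}\bigl(\nu\circ(X_{1-\varepsilon}^*)^{-1},\,\gamma^{\mu,\Sigma}\bigr)\le 2\varepsilon^2\,\mathbb{E}|X-\mu|^2 + 2\bigl(\sqrt{1-\varepsilon^2}-1\bigr)^2\,\tr(\Sigma).
\]
The first term vanishes as $\varepsilon\to 0$ thanks to the finite second moment of $\nu$ (a consequence of \cref{cond:cond1}), while the second is $o(1)$ since $\sqrt{1-\varepsilon^2}\to 1$.

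The principal obstacle is the identification step $\mu_t^*=\overline{\mu}_t$: the Fokker--Planck--to--continuity passage is formal, and promoting it to a rigorous equality requires verifying that $V(1-t,\cdot)$ is sufficiently regular on $[0,1-\varepsilon]$ (away from the singular endpoint $t=1$) both to apply Cauchy--Lipschitz and to invoke continuity-equation uniqueness. The hypothesis $V\in L^1([\varepsilon,1];W^{1,\infty})$ is precisely what delivers both; granting this, the rest is a direct Gaussian calculation.
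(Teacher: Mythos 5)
Your argument is correct and complete. The paper itself defers entirely to the reference \cite{dai2023lipschitz} (``Proof follows \cite{dai2023lipschitz}''), so there is no in-paper proof to compare against; what you have written is a self-contained argument that follows the standard probability-flow-ODE route one would expect that reference to take. Each of your three steps is sound: (i) the hypothesis $V \in L^1([\varepsilon,1];W^{1,\infty})$ makes the time-reversed velocity $-V(1-t,\cdot)$ spatially Lipschitz with time-integrable constant on $[0,1-\varepsilon]$, so Cauchy--Lipschitz gives a unique flow and the superposition principle (equivalently, DiPerna--Lions/Ambrosio uniqueness for the continuity equation) identifies $\nu\circ(X_t^*)^{-1}$ with the marginal law $\overline{\mu}_t$ of the diffusion; (ii) the transition kernel $\overline{X}_t\mid\overline{X}_0=x_0\sim N((1-t)x_0+t\mu,\,t(2-t)\Sigma)$ together with $\Sigma=AA^\top$ gives the stated mixture representation $(1-t)X+t\mu+\sqrt{t(2-t)}\,AY$; and (iii) the explicit coupling via a shared $Y$ yields a $\mathcal{W}_2^2$ bound that vanishes as $\varepsilon\to 0$, using only the finite second moment of $\nu$ (implied by \cref{cond:cond1}). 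One small sharpening you could make, though it changes nothing: since $X$ and $Y$ are independent and $Y$ is centered, the cross term in $\mathbb{E}\bigl|\varepsilon(X-\mu)+(\sqrt{1-\varepsilon^2}-1)AY\bigr|^2$ vanishes exactly, so the factor of $2$ from the Young inequality is unnecessary and you get $\mathcal{W}_2^2\le \varepsilon^2\,\mathbb{E}|X-\mu|^2+(\sqrt{1-\varepsilon^2}-1)^2\tr(\Sigma)$ directly. You are also right to flag the Fokker--Planck-to-continuity-equation identification as the one place where regularity of $V$ is actually consumed; the paper states this identity without proof, and your appeal to the stated $W^{1,\infty}$ hypothesis is exactly what legitimizes both the existence of the flow and the uniqueness needed to match laws.
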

\begin{proof}
  Proof follows \cite{dai2023lipschitz}.
\end{proof}

\subsection{Extension of velocity field}
Recall that the velocity field $V(t, x)$ defined in \cref{eq:v-in-s} yields
\[V(t, x) := \frac{\Sigma \nabla \ln \mathcal{Q}_{1-t}r(x)}{t}, \quad r(x) := \frac{p(x)}{\gamma^{\mu, \Sigma}(x)}, \quad t \in (0, 1]\]
and
\[
  \mathcal{Q}_{1-t} r \left( x \right) :=  \int_{\mathbb{R}^{d}} \varphi^{tx + (1-t)\mu, (1-t^2)\Sigma}\left( y \right) r\left( y \right) \mathrm{d}y,
\]
where $\varphi^{tx + (1-t)\mu, (1-t^2)\Sigma}$ is the density function of the $d$-dimensional normal distribution
with mean $tx + (1-t)\mu$ and covariance matrix $(1-t^2)\Sigma$.
For the convenience of subsequent calculation, we introduce the following symbols:
\begin{equation}
  \label{eq:score-def}
  S(t, x) := \nabla \ln q_t(x), \quad q_t(x) := \int_{\mathbb{R}^d} q(t, x| 1, y) p(y)dy, \quad \forall t \in [0, 1]
\end{equation}
where
$q(t, x| 1, y)$ is density function of $N(ty+(1-t)\mu, (1-t^2)\Sigma)$.

Suppose that the target distribution $\nu$ satisfies the third moment condition, we can supplement the definition of velocity field $V$ at time $t=0$,
so that $V$ is well-defined on the interval $[0, 1]$.
\begin{lemma}
  \label{lem:well-posedness}
  Suppose that $\mathbb{E}_\nu[|X|^3] < \infty$, then
  \[
    \lim_{t \rightarrow 0^+} V(t, x) = \lim_{t \rightarrow 0^+}\partial_t S(t, x) = \mathbb{E}_\nu [X] - \mu.\]
\end{lemma}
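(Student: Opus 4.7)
The plan is to make the definition of $S(t,x)$ explicit in terms of a posterior mean and then resolve the $0/0$ limit via L'H\^opital. First, differentiating the Gaussian kernel $q(t, x\mid 1, y) = \varphi^{ty + (1-t)\mu,\,(1-t^2)\Sigma}(x)$ in $x$ gives
\[
\nabla_x q(t,x\mid 1,y) = -\frac{1}{1-t^2}\Sigma^{-1}\bigl(x - ty - (1-t)\mu\bigr)\,q(t,x\mid 1,y).
\]
Integrating against $p(y)$, dividing by $q_t(x)$, and writing $\bar y_t(x) := q_t(x)^{-1}\int y\, q(t,x\mid 1,y)p(y)\,\mathrm{d}y$ for the posterior mean of $Y$ given $X_t = x$, one obtains the closed form
\[
\Sigma S(t,x) = -\frac{1}{1-t^2}\bigl[(x - (1-t)\mu) - t\,\bar y_t(x)\bigr].
\]

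Second, substituting this into $x - \mu + \Sigma S(t,x)$ and reorganizing yields the clean identity
\[
x - \mu + \Sigma S(t,x) = -\frac{t^2}{1-t^2}(x - \mu) + \frac{t}{1-t^2}\bigl(\bar y_t(x) - \mu\bigr),
\]
so $V(t,x) = -\dfrac{t(x-\mu)}{1-t^2} + \dfrac{\bar y_t(x) - \mu}{1-t^2}$. The $0/0$ singularity at $t=0$ is therefore removable, and the limit reduces to continuity of $\bar y_t(x)$ at zero. Writing the limit equivalently as $\partial_t[x - \mu + \Sigma S(t,x)]|_{t=0}$ produces the companion equality involving $\partial_t S$.

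Third, to verify $\lim_{t\to 0^+} \bar y_t(x) = \mathbb{E}_\nu[X]$, I would invoke dominated convergence. At $t=0$ the kernel $q(0,x\mid 1,y) = \varphi^{\mu,\Sigma}(x)$ is constant in $y$, so the pointwise limit of the integrand in the numerator of $\bar y_t(x)$ is $y\,\varphi^{\mu,\Sigma}(x) p(y)$. On any interval $t \in [0,\tau_0]$ with $\tau_0 < 1$, the Gaussian prefactor $[(2\pi(1-t^2))^{d/2}|\Sigma|^{1/2}]^{-1}$ is bounded and the exponential factor is at most $1$, so $q(t,x\mid 1,y) \le C(x,\Sigma,d,\tau_0)$ uniformly in $(y,t)$; hence $|y|\,q(t,x\mid 1,y)p(y)$ is dominated by $C(x,\Sigma,d,\tau_0)\,|y|p(y)$, which is $\nu$-integrable because the finite third moment of $\nu$ implies a finite first moment. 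Passing to the limit gives $\bar y_t(x) \to \mathbb{E}_\nu[X]$, and therefore $\lim_{t\to 0^+} V(t,x) = \mathbb{E}_\nu[X] - \mu$.

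Finally, the $\partial_t S$ identity follows by differentiating the Step 1 closed form: the contribution from $\partial_t \bar y_t(x)$ is multiplied by $t$ and hence drops at $t=0$, so only the derivatives of the prefactor $(1-t^2)^{-1}$ and of $(x - (1-t)\mu)$ survive, collapsing to $\mathbb{E}_\nu[X] - \mu$. The single technical obstacle across all steps is integrability bookkeeping for the dominated-convergence arguments --- both for $\bar y_t(x)$ itself and, in the derivative step, for $\partial_t \bar y_t(x)$ (which brings in first and second moments of $y$ against $q(t,x\mid 1,y)p(y)$). The third-moment hypothesis on $\nu$ is precisely what makes those bounds comfortable, everything else being short algebra.
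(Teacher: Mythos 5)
Your proof is correct and follows essentially the same route as the paper: recognize the $0/0$ form of $V$, identify its limit with the time-derivative of $S$ at $t=0$, and evaluate that via a dominated-convergence argument that the third-moment hypothesis underwrites. The posterior-mean rewrite $\Sigma S(t,x) = -\tfrac{1}{1-t^2}\bigl[(x-(1-t)\mu) - t\,\bar y_t(x)\bigr]$ is exactly the computation the paper defers to Dai et al.; your version makes the $1/t$ cancellation visibly algebraic rather than a formal appeal to L'H\^opital, and it reduces both limits to the single continuity statement $\bar y_t(x)\to\mathbb{E}_\nu[X]$.
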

\newcommand{\offset}{[x - ty - (1-t) \mu]}
\newcommand{\conv}{q(t, x| 1, y)p(y)}
\newcommand{\qbayes}{q(1, y| t, x)}
\begin{proof}
  Since $V(t, x) = \Sigma \nabla \ln \mathcal{Q}_{1-t} r(x) / t = (x - \mu + \Sigma S(t, x))/ t$ for any $t \in (0, 1]$, it yields
  \[ \lim_{t \rightarrow 0^+} V(t, x) = \Sigma \lim_{t \rightarrow 0^+}\partial_t S(t, x)
    = \Sigma \lim_{t \rightarrow 0^+} \left\{ \frac{\nabla[\partial_t q_t(x)]}
    { q_t(x)} - \frac{\partial_t q_t(x) S(t, x)}{q_t (x)}\right\}.\]
  The calculation is similar to \cite{dai2023lipschitz}, we have
  \[ \lim_{t \rightarrow 0^+} V(t, x) = \mathbb{E}_{\nu}[X] - \mu.\]
\end{proof}

We can now extend the flow $(X^*_t)_{t \in [0, 1)}$ to time $t=1$ such that $X^*_1 \sim \gamma^{\mu, \Sigma}$ which solves the IVP
\begin{equation}
  \label{eq:ivp}
  \mathrm{d} X^*_t = -V(1-t, X^*_t)\mathrm{d}t, \quad X^*_0 \sim \nu, \quad t \in [0, 1],
\end{equation}
where the velocity field
\begin{equation*}
  V(1-t, x) = \frac{1}{1-t}[x - \mu + \Sigma S(1-t, x)], \quad \forall t \in [0, 1), \quad V(0, x) = \mathbb{E}_{\nu}[X] - \mu.
\end{equation*}

\subsection{Lipschitz property of velocity field}
\label{subsec:lip-v}
It remains to establish the well-posedness of a flow that solves the IVP \cref{eq:ivp}.
Without loss of generality, we assume that $\mu = \mathbf{0}$ and $\Sigma = s^2 \mathrm{I}_d$ in the following context.
Then by the same techniques used in \cite{dai2023lipschitz}, we can show the Lipschitz property of velocity field.

\begin{theorem}
  \label{thm:grad-v-upper-bd}
  Suppose \cref{cond:cond1,cond:cond2}, \cref{item:non-neg-kappa} or \cref{item:neg-kappa} hold, then
  \begin{equation}
    \nabla V(t, x) \preceq \left( \frac{t D^2}{s^2(1-t^2)^2} - \frac{t}{1-t^2} \right) \mathrm{I}_d, \quad \forall t \in [0, 1).
  \end{equation}
  \begin{equation}
    \nabla V(t, x) \preceq \frac{t(1-\kappa s^2)}{(1-\kappa s^2)t^2 + \kappa s^2} \mathrm{I}_d,
    \  \forall t \in \left[ \sqrt{\frac{\kappa s}{\kappa s -1} \mathds{1}_{\kappa < 0}}, 1 \right].
  \end{equation}
  Suppose that \cref{cond:cond1,cond:cond2,item:conv} hold, then
  \begin{equation}
    \nabla V(t, x) \preceq t \cdot \frac{(\sigma^2-s^2) [s^2+(\sigma^2-s^2)t^2] + R^2 s^2}{[s^2+(\sigma^2-s^2)t^2]^2} \mathrm{I}_d.
  \end{equation}
\end{theorem}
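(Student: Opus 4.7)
The plan is to reduce all three inequalities to a single estimate on the covariance of an appropriate posterior, and then to handle the three conditions separately. Throughout I assume $\mu=\mathbf 0$ and $\Sigma = s^2 I_d$ as stipulated.

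Step 1 (master identity). The density $q_t$ is the marginal density of $x = tY + \sqrt{1-t^2}\,sZ$ with $Y\sim\nu$ and $Z\sim\gamma_d$ independent, so a direct computation from $q(t,x\mid 1,y)=\varphi^{ty,(1-t^2)s^2 I_d}(x)$ gives
\[
  S(t,x) = -\frac{x - t\,m(t,x)}{(1-t^2)s^2}, \qquad m(t,x) := \mathbb E[Y\mid x],
\]
and hence $V(t,x) = (m(t,x) - tx)/(1-t^2)$. Since the posterior $\nu(\mathrm dy\mid x)$ is an exponential family in $y$ with natural parameter $\eta = tx/((1-t^2)s^2)$, standard differentiation of its log-partition function yields
\[
  \nabla_x m(t,x) = \frac{t}{(1-t^2)s^2}\,\cov(Y\mid x),
\]
and substitution into $\nabla V$ produces the master identity
\[
  \nabla V(t,x) \;=\; \frac{t}{(1-t^2)^2 s^2}\,\cov(Y\mid x) \;-\; \frac{t}{1-t^2}\,I_d.
\]

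Step 2 (parts (a) and (b)). Since the posterior $\nu(\cdot\mid x)$ is supported in $\supp(\nu)$, which has diameter $\sqrt{2}\,D$, an independent-copy argument bounds any unit-projection variance by $D^2$, giving $\cov(Y\mid x)\preceq D^2 I_d$; plugging this into the master identity produces the first bound. For part (b), if $\nu = \exp(-U)$ is $\kappa$-semi-log-concave, then $\log[\nu(\cdot\mid x)]$ equals $-U(y) - |x-ty|^2/(2(1-t^2)s^2)$ up to a constant in $y$, and hence has Hessian $\preceq -(\kappa + t^2/((1-t^2)s^2))I_d$. Whenever this bracket is positive, the Brascamp--Lieb inequality supplies
\[
  \cov(Y\mid x) \;\preceq\; \frac{(1-t^2)s^2}{\kappa(1-t^2)s^2 + t^2}\,I_d,
\]
and substitution followed by elementary algebra recovers the second bound. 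Positivity is automatic when $\kappa\ge 0$ and otherwise forces the lower endpoint on $t$ stated in the theorem.

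Step 3 (part (c)). The trick is to change the conditioning variable so that the bounded support of $\rho$ becomes visible. Writing $\nu = N(\mathbf 0,\sigma^2 I_d) * \rho$ and integrating out the Gaussian factor of $Y$ inside $q_t$, one obtains the reduced Gaussian-mixture representation
\[
  q_t(x) = \int \varphi^{tw,\,\Lambda I_d}(x)\,\rho(\mathrm dw), \qquad \Lambda := s^2 + t^2(\sigma^2 - s^2),
\]
to which the Bayes/exponential-family computation of Step 1 applies with $W\sim\rho$ in place of $Y\sim\nu$ and $\Lambda$ in place of $(1-t^2)s^2$. Plugging the resulting score back into $V(t,x)=(x + s^2 S(t,x))/t$ and simplifying gives the analogue of the master identity,
\[
  \nabla V(t,x) \;=\; \frac{t(\sigma^2 - s^2)}{\Lambda}\,I_d \;+\; \frac{s^2 t}{\Lambda^2}\,\cov(W\mid x).
\]
Since $\supp(\rho)\subset B(\mathbf 0,R)$, the range bound $\cov(W\mid x)\preceq R^2 I_d$ is immediate and yields the third claimed inequality.

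The main obstacle I expect is the bookkeeping behind the exponential-family identity $\nabla_x m = \tfrac{t}{(1-t^2)s^2}\cov(Y\mid x)$ and its convolutional analogue used in Step 3; once these are in hand, the three bounds follow by substitution and elementary algebra, and the restricted time interval in part (b) is forced solely by the positivity of the effective log-concavity constant of the posterior.
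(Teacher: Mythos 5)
Your proposal is correct and reduces the theorem to the same covariance identity that the paper (via its reference to Dai et al.) relies on: $\nabla V(t,x)=\frac{t}{(1-t^2)^2 s^2}\cov(Y\mid x)-\frac{t}{1-t^2}\mathrm{I}_d$, with the three covariance bounds supplied by the support-diameter argument, the Brascamp--Lieb inequality under $\kappa$-semi-log-concavity, and the radius bound on $\rho$ after re-expressing $q_t$ as a mixture over $W\sim\rho$. One small remark: your derivation yields the lower time endpoint $\sqrt{-\kappa s^2/(1-\kappa s^2)}$, whereas the theorem prints $\sqrt{\kappa s/(\kappa s-1)}$; since the appendix fixes $\Sigma=s^2\mathrm{I}_d$, your expression is the consistent one and the printed endpoint appears to contain an $s$-versus-$s^2$ typo.
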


Concerning $\lambda_{\text{max}} (\nabla V(t, x))$, there are the $D^2$-based bound and the $\kappa$-based bound available that can be compared with each other.
One is conditioned on the support assumption and the other one the $\kappa$-semi-log-concave assumption.
We need to decide which one is sharper under certain conditions. Consider the critical case, we get
\begin{equation}
  \label{eq:critical}
  s (D^{-2} - \kappa) = \frac{t^2}{1-t^2}.
\end{equation}
The critical case is $\kappa D^2 = 1$. Note that $\frac{t^2}{1-t^2}$ ranges over $(0, \infty)$ and monotonically increases w.r.t. $t \in (0, 1)$.
Suppose that $\kappa D^2 > 1$, then \cref{eq:critical} has no root over $t \in (0, 1)$.
In this case, $\kappa > 0$ and the $\kappa$-based bound is tighter over $[0, 1)$.
Otherwise, suppose that $\kappa D^2 < 1$, then has a root $t_0 \in (0, 1)$,
and the $D^2$-based bound is tighter over $[0, t_0)$ and the $\kappa$-based bound is tighter over $[t_0, 1)$.

By summarizing the above estimate of the upper bound of $\lambda_{\text{max}} (\nabla V(t, x))$, we obtain the following \cref{thm:jac-v-bd}.

\begin{theorem}
  \label{thm:jac-v-bd}
  Suppose \cref{cond:cond1,cond:cond2}, \cref{item:non-neg-kappa} or \cref{item:neg-kappa} hold.

  If $\kappa D^2 \geq 1$, then
  \[
    \lambda_{\text{max}} (\nabla V(t, x)) \leq \theta_t := \frac{t D^2}{s^2(1-t^2)^2} - \frac{t}{1-t^2}, \quad t \in [0, 1 ].
  \]
  If $\kappa D^2 < 1$, then
  \[
    \lambda_{\text{max}} (\nabla V(t, x)) \leq \theta_t :=
    \begin{cases}
      \frac{tD^2 - s^2t(1-t^2)}{s^2(1-t^2)^2},                 & t \in [0, t_0)  \\
      \frac{t (1-\kappa s^2)}{(1-\kappa s^2)t^2 + \kappa s^2}, & t \in [t_0, 1],
    \end{cases}
  \]
  where $t_0 = \sqrt{\frac{s(1-\kappa D^2)}{(1-\kappa D^2) + D^2}}$ solves \cref{eq:critical}.

  Suppose that \cref{cond:cond1,cond:cond2,item:conv} hold, then
  \[ \lambda_{\text{max}} (\nabla V(t, x)) \leq t \cdot \frac{(\sigma^2-s^2) [s^2+(\sigma^2-s^2)t^2] + R^2 s^2}{[s^2+(\sigma^2-s^2)t^2]^2}. \]
\end{theorem}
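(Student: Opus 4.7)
The plan is to obtain Theorem \ref{thm:jac-v-bd} as a direct corollary of Theorem \ref{thm:grad-v-upper-bd} by combining the two available upper bounds on $\nabla V(t,x)$ under \cref{item:non-neg-kappa} or \cref{item:neg-kappa}, and reading off the third bound verbatim under \cref{item:conv}. Concretely, Theorem \ref{thm:grad-v-upper-bd} supplies one bound derived from the diameter condition, namely $\theta_t^{D} := \frac{tD^2}{s^2(1-t^2)^2} - \frac{t}{1-t^2}$, valid on $[0,1)$, and a second bound derived from the $\kappa$-semi-log-concavity, namely $\theta_t^{\kappa} := \frac{t(1-\kappa s^2)}{(1-\kappa s^2)t^2 + \kappa s^2}$, valid for $t$ in the range indicated. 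Taking the pointwise minimum of these two bounds yields an admissible upper envelope for $\lambda_{\max}(\nabla V(t,x))$, and the task reduces to determining which bound is tighter on each subinterval of $[0,1)$.

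First I would set $\theta_t^D = \theta_t^\kappa$ and simplify to arrive at the displayed critical equation \cref{eq:critical}, $s(D^{-2}-\kappa) = \frac{t^2}{1-t^2}$. Since $t \mapsto t^2/(1-t^2)$ is a strictly increasing bijection from $(0,1)$ onto $(0,\infty)$, the existence and location of a crossing time is entirely controlled by the sign of $D^{-2}-\kappa$, i.e.\ by the dimensionless quantity $\kappa D^2$. If $\kappa D^2 \ge 1$, the right side never matches the left for $t \in (0,1)$, so one of the two envelopes dominates throughout; a sign check at a convenient test value (for instance $t$ near $0$ or near $1$) identifies the $D^2$-based bound as the uniformly tighter one in this case. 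If $\kappa D^2 < 1$, equation \cref{eq:critical} has exactly one root $t_0 \in (0,1)$, which one can solve explicitly to obtain $t_0 = \sqrt{s(1-\kappa D^2)/((1-\kappa D^2)+D^2)}$; a monotonicity comparison on either side of $t_0$ then shows that $\theta_t^D$ is tighter on $[0,t_0)$ while $\theta_t^\kappa$ is tighter on $[t_0,1)$. Piecing the two bounds together at $t_0$ yields the stated piecewise definition of $\theta_t$.

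For the convolution setting \cref{item:conv}, no comparison is needed: the bound $t \cdot \frac{(\sigma^2-s^2)[s^2+(\sigma^2-s^2)t^2] + R^2 s^2}{[s^2+(\sigma^2-s^2)t^2]^2}$ is already delivered by Theorem \ref{thm:grad-v-upper-bd} and is simply transcribed. The only genuinely non-bookkeeping step is therefore the crossover analysis, and the main subtlety to watch for is the validity range of $\theta_t^\kappa$ when $\kappa < 0$: one must verify that the root $t_0$ obtained in the case $\kappa D^2 < 1$ lies above the threshold $\sqrt{\kappa s/(\kappa s - 1)}$ from \cref{thm:grad-v-upper-bd} so that $\theta_t^\kappa$ is legitimately applied on $[t_0,1]$. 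This amounts to an algebraic inequality that can be checked by clearing denominators and using $\kappa D^2 < 1$, which I expect to be the only spot where a careful calculation is needed.
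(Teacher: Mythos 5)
Your overall strategy matches the paper's (which proves \cref{thm:jac-v-bd} by exactly the paragraph preceding it): take the two envelopes supplied by \cref{thm:grad-v-upper-bd}, locate their crossing via \cref{eq:critical}, and split into cases according to $\kappa D^2 \gtrless 1$, with the convolution case copied verbatim. Two remarks are worth making.

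First, your ``sign check'' in the case $\kappa D^2 \ge 1$ is backwards. As $t \to 1^-$, the $D^2$-based bound $\theta_t^D = \frac{tD^2}{s^2(1-t^2)^2} - \frac{t}{1-t^2}$ diverges to $+\infty$, while the $\kappa$-based bound $\theta_t^\kappa = \frac{t(1-\kappa s^2)}{(1-\kappa s^2)t^2 + \kappa s^2}$ tends to the finite value $1 - \kappa s^2$; since there is no crossing, the $\kappa$-based bound is in fact uniformly the tighter one on all of $[0,1)$ — and indeed the paper's own discussion states exactly this. The theorem nevertheless records the $D^2$-based bound in this case, which is still a valid upper bound by \cref{thm:grad-v-upper-bd}, so the statement is not endangered; but you should not assert that the $D^2$-bound is the smaller of the two, because it is not, and the ``sign check'' you describe would actually tell you so.

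Second, you are right that the validity range of the $\kappa$-based bound in \cref{thm:grad-v-upper-bd}, namely $t \ge \sqrt{\kappa s/(\kappa s - 1)}$ when $\kappa < 0$, has to be compared against $t_0$ before the piecewise formula can be stitched together, and the paper does not spell this out. The check does close: writing $\kappa = -a$ with $a > 0$, the inequality $t_0^2 \ge \frac{as}{as+1}$ reduces after clearing denominators to $s \ge 0$, so the $\kappa$-bound is indeed applicable on $[t_0, 1]$. This is a genuine gap in the paper's exposition that your proposal correctly flags, and spelling it out strengthens the argument.
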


In either case, $\lambda_{\text{max}} (\nabla V(t, x))$ is finitely upper bounded by $\theta_t$ over $t \in [0, 1]$,
and so is $\left\| \nabla V(t, x) \right\|_{\mathrm{op}}$, that is to say,
\[
  \left\|\nabla V(t, x)\right\|_{\mathrm{op}} \leq \theta_t, \quad \forall t \in [0, 1].
\]
We now know that the velocity field $-V(x, 1-t)$ is smooth and with bounded derivative on $[0, 1] \times \mathbb{R}^d$.
Therefore the IVP \cref{eq:ivp} has a unique solution and the flow map $x \mapsto X_t^*(x)$ is a diffeomorphism from $\mathbb{R}^d$ to $\mathbb{R}^d$ at any $t \in [0, 1]$. A standard time reversal argument of \cref{eq:ivp} would yield the F\"{o}llmer flow.

\subsection{Lipschitz property of transport maps}
\begin{proof}
  Prove by Gr\"{o}nwall's inequality, we have
  \[
    \lip(X_1(x)) \leq \| \nabla X_1(x) \|_{\mathrm{op}} \leq \exp ( \int_0^1 \theta_s \mathrm{d}s).
  \]
\end{proof}

\section{Convergence of Monte Carlo F\"{o}llmer flow}
\label{app:analysis}
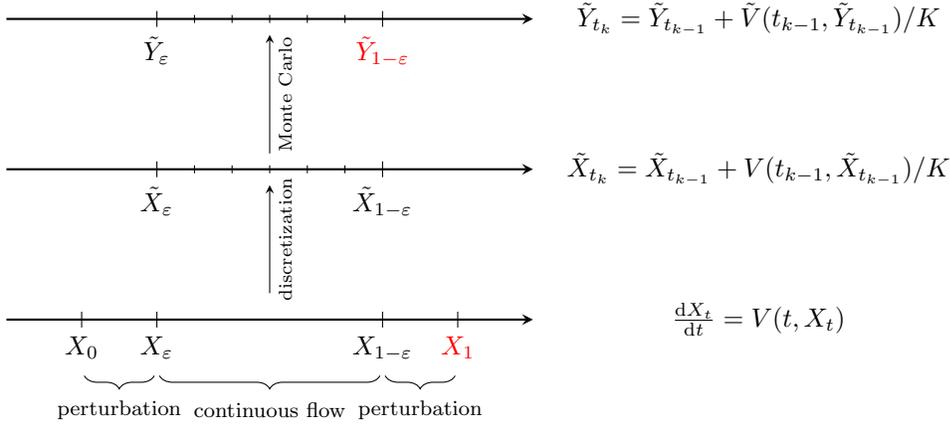
\begin{figure}[htpb]
  \centering
  \label{fig:proof-sketch}
  \begin{tikzpicture}
    \draw[thick,->] (-1,0) -- (6,0);
    \draw (0cm,3pt) -- (0cm,-3pt) node[anchor=north] (X0) {$X_0$};
    \draw (1cm,3pt) -- (1cm,-3pt) node[anchor=north] (Xe) {$X_\varepsilon$};
    \draw (4cm,3pt) -- (4cm,-3pt) node[anchor=north] (X1me) {$X_{1-\varepsilon}$};
    \draw (5cm,3pt) -- (5cm,-3pt) node[anchor=north] (X1) {\color{red} $X_1$};
    \draw [decorate,decoration={brace,amplitude=5pt,mirror,raise=4ex}] (0cm+1pt,-4pt) -- (1cm-1pt,-4pt) node[midway,yshift=-3em]{\footnotesize perturbation};
    \draw [decorate,decoration={brace,amplitude=5pt,mirror,raise=4ex}] (1cm+1pt,-4pt) -- (4cm-1pt,-4pt) node[midway,yshift=-3em]{\footnotesize continuous flow};
    \draw [decorate,decoration={brace,amplitude=5pt,mirror,raise=4ex}] (4cm+1pt,-4pt) -- (5cm-1pt,-4pt) node[midway,yshift=-3em]{\footnotesize perturbation};

    \draw[thick,->] (-1,2) -- (6,2);
    \draw (1cm,2cm+3pt) -- (1cm,2cm-3pt) node[anchor=north] {$\tilde{X}_\varepsilon$};
    \draw (4cm,2cm+3pt) -- (4cm,2cm-3pt) node[anchor=north] {$\tilde{X}_{1-\varepsilon}$};
    \foreach \x in {1.5,2,2.5,3,3.5}
    \draw (\x cm,2cm+1.5pt) -- (\x cm,2cm-1.5pt) node[anchor=north] {};

    \draw[thick,->] (-1,4) -- (6,4);
    \draw (1cm,4cm+3pt) -- (1cm,4cm-3pt) node[anchor=north] {$\tilde{Y}_\varepsilon$};
    \draw (4cm,4cm+3pt) -- (4cm,4cm-3pt) node[anchor=north] {\color{red} $\tilde{Y}_{1-\varepsilon}$};
    \foreach \x in {1.5,2,2.5,3,3.5}
    \draw (\x cm,4cm+1.5pt) -- (\x cm,4cm-1.5pt) node[anchor=north] {};

    \draw[->] (2.5, 4pt+6pt) -- (2.5,2cm-6pt) node [midway, below, sloped] {\scriptsize{discretization}};
    \draw[->] (2.5, 2cm+6pt) -- (2.5,4cm-6pt) node [midway, below, sloped] {\scriptsize{Monte Carlo}};

    \node at(9cm,0cm) {$\frac{\mathrm{d} X_t}{\mathrm{d} t} = V(t, X_t)$};
    \node at(9cm,2cm) {$\tilde{X}_{t_k} = \tilde{X}_{t_{k-1}} + V(t_{k-1}, \tilde{X}_{t_{k-1}})/K$};
    \node at(9cm,4cm) {$\tilde{Y}_{t_k} = \tilde{Y}_{t_{k-1}} + \tilde{V}(t_{k-1}, \tilde{Y}_{t_{k-1}})/K$};
  \end{tikzpicture}
  \caption{Sketch of proof.}
\end{figure}
Without loss of generality, we assume that $\mu = \mathbf{0}$ and $\Sigma = \mathrm{I}_d$ in the following analysis.
\subsection{Preparations}
We first prove that $V$ and $X$ are finite over the unit-time interval.
\subsubsection*{Finite $V$}
The next lemma shows that F\"{o}llmer velocity  is finite.
\begin{lemma}
  \label{lem:v-bd}
  Suppose \cref{cond:cond1,cond:cond2,cond:cond3,cond:lip,cond:bound} hold.
  Let $C$ denote $\frac{\gamma}{\xi}$, then for all $x \in \mathbb{R}^d, t \in [0, 1]$, we have
  \begin{gather*}
    \| V(t, x) \|_2 \leq C.
  \end{gather*}
\end{lemma}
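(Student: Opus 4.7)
The plan is to work from the convolution representation of the velocity field given in equation \cref{eq:v-mc-raw}, which under the normalization $\mu=\mathbf{0}$, $\Sigma=\mathrm{I}_d$ reduces to
\[
V(t,x) \;=\; \frac{\mathcal{Q}_{1-t}\nabla r(x)}{\mathcal{Q}_{1-t} r(x)}, \qquad t\in(0,1].
\]
The advantage of this form is that both numerator and denominator are averages of $\nabla r$ and $r$ against the Gaussian kernel $\varphi^{tx,(1-t^2)\mathrm{I}_d}$, so the Lipschitz hypothesis on $r$ and the pointwise lower bound $r\ge\xi$ transfer directly to bounds on $V$.

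First I would observe that \cref{cond:lip} gives $\|\nabla r(y)\|_2\le\gamma$ for every $y\in\mathbb{R}^d$ (Lipschitzness of $r$ with constant $\gamma$ forces $\|\nabla r\|_\infty\le \gamma$ a.e., and then smoothness provided by the assumptions on $r$ makes this pointwise). Pulling the norm under the integral that defines $\mathcal{Q}_{1-t}$ yields
\[
\bigl\|\mathcal{Q}_{1-t}\nabla r(x)\bigr\|_2 \;\le\; \int_{\mathbb{R}^d}\|\nabla r(y)\|_2\,\varphi^{tx,(1-t^2)\mathrm{I}_d}(y)\,\mathrm{d}y \;\le\; \gamma,
\]
since $\varphi^{tx,(1-t^2)\mathrm{I}_d}$ is a probability density.

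Next, \cref{cond:bound} gives $r\ge\xi$ pointwise, so averaging against the same Gaussian density preserves this lower bound:
\[
\mathcal{Q}_{1-t} r(x) \;=\; \int_{\mathbb{R}^d} r(y)\,\varphi^{tx,(1-t^2)\mathrm{I}_d}(y)\,\mathrm{d}y \;\ge\; \xi.
\]
Dividing the two estimates yields $\|V(t,x)\|_2\le\gamma/\xi=C$ for every $t\in(0,1]$ and every $x\in\mathbb{R}^d$.

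Finally, to close the endpoint $t=0$, I would invoke the continuity of $V$ at $0$ established in \cref{lem:well-posedness}: since $V(t,x)\to V(0,x)=\mathbb{E}_\nu[X]-\mu$ as $t\to 0^+$ and the uniform-in-$t$ estimate $\|V(t,x)\|_2\le C$ holds on $(0,1]$, passing to the limit preserves the bound. I do not expect a real obstacle here: the proof is essentially a two-line sandwich of a ratio, and the only subtlety is being explicit about moving the norm inside the integral defining $\mathcal{Q}_{1-t}$ and about the extension by continuity at $t=0$.
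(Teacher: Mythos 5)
Your proof is correct and follows essentially the same route as the paper's: both start from the representation $V(t,x)=\mathcal{Q}_{1-t}\nabla r(x)/\mathcal{Q}_{1-t}r(x)$ (\cref{eq:v-mc-raw} with $\Sigma=\mathrm{I}_d$), bound the numerator by $\gamma$ via Lipschitzness of $r$ and the denominator below by $\xi$ via \cref{cond:bound}, then extend to $t=0$ by continuity of $V$. You merely spell out the intermediate steps that the paper leaves implicit.
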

\begin{proof}
  By \cref{eq:v-mc-raw,cond:lip,cond:bound}, we have
  \[ \| V(t, x) \|_2 = \frac{\| \mathcal{Q}_{1-t} \nabla r(x) \|_2}{\mathcal{Q}_{1-t} r(x)} \leq C, \quad \forall t \in (0, 1]. \]
  Since $V(t, x)$ can be continuously extended to $t=0$, the bound holds for $t \in [0, 1]$.
\end{proof}

\subsubsection*{Finite $X$}
The next lemma shows that $\{X_t\}_{t \in [0, 1]}$ is bounded in the sense of expectation.
\begin{lemma}
  \label{lem:x-bd}
  Suppose \cref{cond:cond1,cond:cond2,cond:cond3,cond:lip,cond:bound} hold. Let $C$ denote $\frac{\gamma}{\xi}$, then for all $t \in [0, 1]$,
  \[ \mathbb{E} \|X_t\|_2^2 \leq 2 ( d + C^2) \]
\end{lemma}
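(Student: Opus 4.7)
The plan is to leverage the velocity bound established in the preceding \cref{lem:v-bd} together with the integral form of the IVP \cref{eq:reverse-ivp}, reducing everything to a direct estimate on $\mathbb{E}\|X_0\|_2^2$.

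First, I would rewrite the ODE in integral form, namely
\[
  X_t = X_0 + \int_0^t V(s, X_s)\,\mathrm{d}s, \qquad t \in [0,1],
\]
which is well defined by the well-posedness result of \cref{thm:well-pose}. By the triangle inequality in $L^2(\mathbb{R}^d)$ and the uniform velocity bound $\|V(s, X_s)\|_2 \le C$ from \cref{lem:v-bd}, one obtains the pathwise estimate
\[
  \|X_t\|_2 \le \|X_0\|_2 + \int_0^t \|V(s, X_s)\|_2\,\mathrm{d}s \le \|X_0\|_2 + C t \le \|X_0\|_2 + C.
\]

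Next, I would square and apply the elementary inequality $(a+b)^2 \le 2a^2 + 2b^2$ to obtain $\|X_t\|_2^2 \le 2\|X_0\|_2^2 + 2C^2$. Taking expectations and recalling that in this setting $X_0 \sim \gamma^{\mu,\Sigma} = \gamma_d$, so that $\mathbb{E}\|X_0\|_2^2 = d$, the claim $\mathbb{E}\|X_t\|_2^2 \le 2(d + C^2)$ follows immediately.

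There is essentially no obstacle here: once \cref{lem:v-bd} is in hand, the argument is two lines of triangle inequality plus an application of $(a+b)^2 \le 2a^2 + 2b^2$. The only subtlety worth noting is that the integral form relies on $V$ being defined and bounded at $t=0$ (via the extension $V(0,x) = \mathbb{E}_\nu[X] - \mu$ in \cref{def:fflow}), but this has already been secured by \cref{lem:well-posedness} and the continuity of $V$ on $[0,1]$.
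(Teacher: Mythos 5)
Your proof is correct and follows essentially the same route as the paper: write the solution in integral form, bound $\|X_t\|_2$ by $\|X_0\|_2 + C$ using the uniform velocity bound from \cref{lem:v-bd}, square with the elementary inequality $(a+b)^2 \le 2a^2 + 2b^2$, and take expectations using $\mathbb{E}\|X_0\|_2^2 = d$. The paper's version expands the square first and invokes Young's inequality ($2ab \le ta^2 + b^2/t$) plus Cauchy--Schwarz before bounding the velocity, but this is a cosmetic rearrangement of the same two-line argument.
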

\begin{proof}
  By the definition of $X_t$, we have
  \begin{equation*}
    \begin{aligned}
      \|X_t\|_2^2
      \leq & \| X_0\|_2^2 + \left(\int_0^t \| V(X_u, u) \|_2 \mathrm{d} u\right)^2
      + 2 \| X_0\|_2 \int_0^t \| V(u, X_u) \|_2 \mathrm{d} u                                      \\
      \leq & (1+t) \| X_0\|_2^2 + (1+1/t)  \left(\int_0^t \| V(u, X_u) \|_2 \mathrm{d} u\right)^2 \\
      \leq & (1+t) \| X_0\|_2^2 + (t+1)  \int_0^t \| V(u, X_u) \|_2^2 \mathrm{d} u                \\
      \leq & 2 \left( \| X_0\|_2^2 + C^2 \right)
    \end{aligned}
  \end{equation*}
  where the second inequality holds due to $2 a b \leq t a^2 + b^2/t$,
  the third inequality holds by Cauchy-Schwarz inequality,
  and the last inequality holds by \cref{lem:v-bd}.

  Taking expectation, we have
  \[
    \mathbb{E} \|X_t\|_2^2 \leq 2 \left( \mathbb{E} \|X_0\|_2^2 + C^2 \right) \leq  2 (d + C^2).
  \]
\end{proof}

\subsection{Euler's method}
We first derive the upper bound of the first derivative of $V$ with respect to spatial and temporal variable.
Then by Taylor's theorem, we derive the local error and further obtain the accumulated global error.
\subsubsection*{Spatial and temporal derivatives of velocity}
We first derive the Lipschitz constant of the velocity under \cref{cond:lip,cond:bound}.
\begin{lemma}
  \label{lem:der-v}
  Suppose \cref{cond:cond1,cond:cond2,cond:cond3,cond:lip,cond:bound} hold, let $C$ denote $\frac{\gamma}{\xi}$, $L$ denote $C + C^2$.
  Then for all $x, y \in \mathbb{R}^d, t, s \in [\varepsilon, 1-\varepsilon]$, where $\varepsilon \in (0, 1/2)$, we have
  \begin{gather*}
    \| \nabla V(t, x) \|_2 \leq L, \\
    | \partial_t V(t, x) |^2 \leq 4 (C^2+C^4) \left( \| x \|_2^2 + \varepsilon^{-2} d \right).
  \end{gather*}
\end{lemma}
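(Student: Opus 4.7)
The plan is to work from the representation already derived in the paper, namely
\[
V(t,x)=\frac{\mathcal{Q}_{1-t}\nabla r(x)}{\mathcal{Q}_{1-t}r(x)},\qquad
\mathcal{Q}_{1-t}f(x)=\int_{\mathbb{R}^d}f\bigl(tx+\sqrt{1-t^2}z\bigr)\,\mathrm{d}\gamma_d(z),
\]
which follows from \cref{eq:v-in-q,eq:v-mc-raw} once $\mu=\mathbf{0}$ and $\Sigma=\mathrm{I}_d$. The two assumptions feed in as elementary pointwise bounds: $r$ being $\gamma$-Lipschitz gives $\|\nabla r(\cdot)\|_2\le \gamma$ a.e., and $\nabla r$ being $\gamma$-Lipschitz gives $\|\nabla^2 r(\cdot)\|_{\mathrm{op}}\le \gamma$ a.e.; in combination with $r\ge\xi$ (so $\mathcal{Q}_{1-t}r(x)\ge\xi$), these yield $\|\mathcal{Q}_{1-t}\nabla r(x)\|_2\le\gamma$ and $\|\mathcal{Q}_{1-t}\nabla^2 r(x)\|_{\mathrm{op}}\le\gamma$. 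Everything then reduces to the quotient rule applied to $V=A/B$ with $A:=\mathcal{Q}_{1-t}\nabla r$ and $B:=\mathcal{Q}_{1-t}r$.

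\textbf{Spatial bound.} I would first record the commutation identity $\nabla\mathcal{Q}_{1-t}f(x)=t\,\mathcal{Q}_{1-t}\nabla f(x)$, which is immediate from the chain rule on $f(tx+\sqrt{1-t^2}z)$. Applied twice, this gives $\nabla B=tA$ and $\nabla A=t\,\mathcal{Q}_{1-t}\nabla^2 r$, so the Jacobian is
\[
\nabla V(t,x)=\frac{t\,\mathcal{Q}_{1-t}\nabla^2 r(x)}{\mathcal{Q}_{1-t}r(x)}-\frac{t\,A(t,x)\,A(t,x)^{\top}}{[\mathcal{Q}_{1-t}r(x)]^2}.
\]
Taking operator norms and inserting the three pointwise bounds yields $\|\nabla V(t,x)\|_{\mathrm{op}}\le tC+tC^2\le C+C^2=L$, as desired.

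\textbf{Temporal bound.} Differentiating under the integral sign,
\[
\partial_t B(t,x)=\int \nabla r\bigl(tx+\sqrt{1-t^2}z\bigr)\cdot\Bigl(x-\tfrac{t}{\sqrt{1-t^2}}z\Bigr)\mathrm{d}\gamma_d(z),
\]
and similarly $\partial_t A$ with $\nabla r$ replaced by $\nabla^2 r$ (acting on the same displacement vector). Using $\|\nabla^2 r\|_{\mathrm{op}}\le\gamma$ in the vector-valued case and $\|\nabla r\|_2\le\gamma$ in the scalar case, together with Jensen to bound $\int\|z\|\,\mathrm{d}\gamma_d(z)\le\sqrt{d}$, both derivatives obey
\[
\|\partial_t A(t,x)\|_2,\;|\partial_t B(t,x)|\;\le\;\gamma\Bigl(\|x\|+\tfrac{t\sqrt{d}}{\sqrt{1-t^2}}\Bigr).
\]
Plugging into the quotient rule $\partial_t V=\partial_t A/B-A\,\partial_t B/B^2$ and using the same pointwise bounds yields
\[
\|\partial_t V(t,x)\|_2\le (C+C^2)\Bigl(\|x\|+\tfrac{t\sqrt{d}}{\sqrt{1-t^2}}\Bigr).
\]
Squaring and using $(a+b)^2\le 2a^2+2b^2$ together with $(C+C^2)^2\le 2(C^2+C^4)$ gives
\[
|\partial_t V(t,x)|^2\le 4(C^2+C^4)\Bigl(\|x\|^2+\tfrac{t^2 d}{1-t^2}\Bigr).
\]
For $t\in[\varepsilon,1-\varepsilon]$ with $\varepsilon\in(0,1/2)$ we have $1-t^2\ge 2\varepsilon-\varepsilon^2\ge\varepsilon$ and $t^2\le 1$, so $\tfrac{t^2 d}{1-t^2}\le d/\varepsilon\le\varepsilon^{-2}d$, closing the bound.

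\textbf{Expected obstacles.} There is no deep step; the only points to be careful about are (i) justifying the differentiation under the integral (handled by dominated convergence using the $\gamma$-Lipschitz hypotheses and Gaussian tails of $\varphi$), (ii) keeping track of which pointwise bound $\|\nabla r\|_2\le\gamma$ versus $\|\nabla^2 r\|_{\mathrm{op}}\le\gamma$ is being used in each of the four terms coming out of the quotient rule, and (iii) the algebra $(C+C^2)^2\le 2(C^2+C^4)$ needed to match the prefactor claimed in the statement. The $\varepsilon^{-2}$ in the statement is slack: the argument actually delivers $\varepsilon^{-1}$, but since $\varepsilon<1$ the weaker stated form is immediate.
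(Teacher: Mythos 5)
Your proposal is correct and arrives at the same constants by the same underlying decomposition (the quotient rule applied to $V = \mathcal{Q}_{1-t}\nabla r / \mathcal{Q}_{1-t} r$, fed by the pointwise bounds $\|\mathcal{Q}_{1-t}\nabla r\|_2\le\gamma$, $\mathcal{Q}_{1-t}r\ge\xi$, and the Lipschitz properties of $r,\nabla r$). The only methodological difference from the paper's proof is that you differentiate $V$ directly in $x$ and in $t$, whereas the paper bounds the finite differences $\|V(t,x)-V(t,y)\|_2$ and $\|V(t,x)-V(s,x)\|_2$ and divides by $\|x-y\|_2$, resp. $|t-s|$. The paper's finite-difference route only ever invokes the Lipschitz hypotheses themselves (together with Jensen to push the norm inside $\mathcal Q$), and therefore never has to produce $\nabla^2 r$; your differentiation route needs $\nabla^2 r$ to exist a.e.\ with $\|\nabla^2 r\|_{\mathrm{op}}\le\gamma$, which you correctly justify by Rademacher's theorem and then smooth out under the Gaussian integral, but it does add that extra (harmless) layer of justification. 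Both yield $\|\nabla V\|_{\mathrm{op}}\le tC+tC^2\le L$ and the identical temporal estimate $4(C^2+C^4)(\|x\|_2^2+\varepsilon^{-2}d)$ after $(a+b)^2\le 2a^2+2b^2$ and $(C+C^2)^2\le 2(C^2+C^4)$, and both carry the same slack in replacing $\varepsilon^{-1}$ by $\varepsilon^{-2}$.
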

\begin{proof}
  \begin{equation*}
    \begin{aligned}
           & \| V(t, x) - V(t, y) \|_2                                                    \\
      =    & \left\| \frac{\mathcal{Q}_{1-t} \nabla r(x)}{\mathcal{Q}_{1-t} r(x)}
      - \frac{\mathcal{Q}_{1-t} \nabla r(y)}{\mathcal{Q}_{1-t} r(y)} \right\|_2           \\
      \leq & \frac{\| \mathcal{Q}_{1-t} \nabla r(x) - \mathcal{Q}_{1-t} \nabla r(y) \|_2}
      {\mathcal{Q}_{1-t} r(y)} + \|V(t, x)\|_2 \cdot \frac{|\mathcal{Q}_{1-t} r(x) - \mathcal{Q}_{1-t} r(y)|}
      {\mathcal{Q}_{1-t} r(y)}                                                            \\
      \leq & L \| x-y \|_2,
    \end{aligned}
  \end{equation*}
  where the first inequality holds due to $|\frac{a}{b} - \frac{c}{d}| \leq |\frac{a-c}{d}| + |\frac{a}{b} \cdot \frac{b-d}{d}|$,
  and the second inequality holds due to \cref{cond:lip,cond:bound}.
  Similarly, we have
  \begin{equation*}
    \begin{aligned}
           & \| V(t, x) - V(s, x) \|_2^2                                                       \\
      =    & \left\| \frac{\mathcal{Q}_{1-t} \nabla r(x)}{\mathcal{Q}_{1-t} r(x)}
      - \frac{\mathcal{Q}_{1-s} \nabla r(x)}{\mathcal{Q}_{1-s} r(x)} \right\|_2^2              \\
      \leq & 2 \left( \frac{\| \mathcal{Q}_{1-t} \nabla r(x) -
        \mathcal{Q}_{1-s} \nabla r(x) \|_2^2}{|\mathcal{Q}_{1-t} r(x)|^2}
      + \|V(x, s)\|_2^2 \cdot \frac{|\mathcal{Q}_{1-s} r(x)
      - \mathcal{Q}_{1-t} r(x)|^2}{|\mathcal{Q}_{1-t} r(x)|^2} \right)                         \\
      \leq & 2 (C^2 + C^4) \mathbb{E}_{Z \sim \gamma_d} \|
      (t-s) x + (\sqrt{1-t^2} - \sqrt{1-s^2}) Z \|_2^2                                         \\
      \leq & 4 (C^2 + C^4) \left( |t-s|^2 \|x\|_2^2
      + \left| \frac{s^2-t^2}{\sqrt{1-t^2} + \sqrt{1-s^2}} \right|^2 d \right)                 \\
      \leq & 4 (C^2 + C^4) \left| t-s \right|^2 \left( \|x\|_2^2 + \varepsilon^{-2} d \right),
    \end{aligned}
  \end{equation*}
  where the first inequality holds due to $|\frac{a}{b} - \frac{c}{d}|^2 \leq 2 (|\frac{a-c}{d}|^2 + |\frac{a}{b} \cdot \frac{b-d}{d}|^2)$,
  the second inequality holds due to \cref{cond:lip,cond:bound},
  and the last inequality holds due to $\frac{|s^2 - t^2|}{\sqrt{1-t^2} + \sqrt{1-s^2}} \leq \frac{|s-t|}{\sqrt{1-(1-\varepsilon)^2}} \leq \varepsilon^{-1} |s-t| $.
\end{proof}

\begin{theorem}[Truncation error]
  \label{thm:trunc}
  Suppose that \cref{cond:cond1,cond:cond2,cond:cond3,cond:lip,cond:bound} hold. There exists constant $C$, such that
  \begin{equation}
    \label{eq:trunc-error}
    \| X_\varepsilon - X_0 \|_2^2 \leq \varepsilon C^2, \quad \| X_{1} - X_{1-\varepsilon} \|_2^2 \leq \varepsilon C^2.
  \end{equation}
\end{theorem}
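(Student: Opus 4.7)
The plan is to derive both bounds directly from the ODE representation of the flow together with the uniform velocity bound established in \cref{lem:v-bd}. Since $X_t$ solves the IVP \cref{eq:reverse-ivp} with $dX_t/dt = V(t,X_t)$, for any $0 \le a \le b \le 1$ I can write
\[
X_b - X_a \;=\; \int_a^b V(s, X_s)\, \mathrm{d}s.
\]
Applied to $a=0, b=\varepsilon$ and to $a=1-\varepsilon, b=1$, this reduces the claim to controlling the $L^2$-norm of an integrated velocity over a short time window.

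Next I would apply the triangle inequality for Bochner integrals (or equivalently Cauchy--Schwarz) to get
\[
\|X_\varepsilon - X_0\|_2 \;\le\; \int_0^\varepsilon \|V(s, X_s)\|_2 \,\mathrm{d}s,
\]
and then invoke \cref{lem:v-bd}, which gives $\|V(t,x)\|_2 \le C = \gamma/\xi$ uniformly over $(t,x) \in [0,1]\times \mathbb{R}^d$. This immediately yields $\|X_\varepsilon - X_0\|_2 \le C\varepsilon$, so squaring and using $\varepsilon^2 \le \varepsilon$ for $\varepsilon \in (0,1)$ produces the stated bound $\|X_\varepsilon - X_0\|_2^2 \le \varepsilon C^2$. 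The argument for the right endpoint is identical, simply integrating over $[1-\varepsilon, 1]$ instead of $[0,\varepsilon]$; the uniform bound on $V$ is valid up to and including the endpoints by the extension discussed in \cref{lem:well-posedness} and the analogous extension at $t=1$.

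There is no real obstacle here: the result is a direct corollary of the uniform $L^\infty$ bound on the velocity field, and all conditions \cref{cond:cond1,cond:cond2,cond:cond3,cond:lip,cond:bound} are used only through their role in guaranteeing \cref{lem:v-bd}. The only minor subtlety is to confirm that \cref{lem:v-bd} applies at the two endpoints $t=0$ and $t=1$, which is already noted in its proof via the continuous extension of $V$. Thus the constant $C$ in the theorem statement can be taken to be precisely $\gamma/\xi$.
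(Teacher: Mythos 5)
Your proof is correct and follows the intended argument: integrate the ODE over the short window, bound $\|V(s,X_s)\|_2 \le C$ uniformly via \cref{lem:v-bd}, and use $\varepsilon^2 \le \varepsilon$ for $\varepsilon \in (0,1)$. The paper's own one-line proof cites \cref{lem:der-v} (the derivative/Lipschitz bounds on $V$), which appears to be a mislabeled reference — what is actually needed is the uniform boundedness of $V$ from \cref{lem:v-bd} on the full interval $[0,1]$, exactly as you identified.
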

\begin{proof}
  Prove by \cref{lem:der-v}, the Lipschitz property of the velocity field .
\end{proof}

\subsubsection*{Global truncation error of Euler's method}
\begin{theorem}[Discretization error]
  \label{thm:euler}
  Suppose that \cref{cond:cond1,cond:cond2,cond:cond3,cond:lip,cond:bound} holds, then
  \[
    \mathcal{W}_2 (\law(X_{1-\varepsilon}), \law(\tilde{X}_{1-\varepsilon})) \leq \mathcal{O} \left( \varepsilon \right) + \mathcal{O} \left( \frac{d}{\varepsilon^2 K^2} \right).
  \]
\end{theorem}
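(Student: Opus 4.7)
The plan is to compare the continuous Föllmer flow $X_t$ with the Euler iterate $\tilde{X}_{t_k}$ through a synchronous coupling of their initial Gaussian laws, and then decompose the total $\mathcal{W}_2$ error into a truncation piece accounting for the missing flow on $[0,\varepsilon]$ and a discretization piece for the Euler scheme on $[\varepsilon, 1-\varepsilon]$. All regularity estimates from \cref{lem:v-bd,lem:x-bd,lem:der-v} remain valid on $[\varepsilon, 1-\varepsilon]$, so the analysis reduces to a standard local-error-plus-discrete-Gr\"onwall argument modulated by the temporal-derivative blow-up of $V$ near the endpoints.

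First I would introduce an auxiliary continuous trajectory $\hat{X}_t$ on $[\varepsilon, 1-\varepsilon]$ by setting $\hat{X}_\varepsilon := X_0$ and letting it evolve under $V$. Since $X_0$ and $\tilde{X}_\varepsilon$ are both distributed as $\gamma^{\mu,\Sigma}$, the identity coupling $\tilde{X}_\varepsilon = X_0 = \hat{X}_\varepsilon$ is admissible and has zero $\mathcal{W}_2$ cost. The pointwise bound $\|X_\varepsilon - X_0\| \le \varepsilon C$ from \cref{thm:trunc}, combined with Gr\"onwall's inequality using the spatial Lipschitz constant $L$ from \cref{lem:der-v}, then transports this initial discrepancy to the endpoint: $\|X_{1-\varepsilon} - \hat{X}_{1-\varepsilon}\| \le e^L \varepsilon C$, which is the $\mathcal{O}(\varepsilon)$ summand.

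Next I would bound $\|\hat{X}_{1-\varepsilon} - \tilde{X}_{1-\varepsilon}\|$; since both start from $X_0$ at time $\varepsilon$, the initial error $e_0 := \hat{X}_\varepsilon - \tilde{X}_\varepsilon$ vanishes. Applying the fundamental theorem of calculus to $s \mapsto V(s,\hat{X}_s)$ yields the local-error identity
\[
  \hat{X}_{t_{k+1}} - \hat{X}_{t_k} - h\,V(t_k, \hat{X}_{t_k}) \;=\; \int_{t_k}^{t_{k+1}} (s - t_k)\bigl[\partial_s V(s, \hat{X}_s) + \nabla V(s, \hat{X}_s)\,V(s, \hat{X}_s)\bigr]\,\mathrm{d}s \;=:\; \tau_k.
\]
Inserting $\|V\| \le C$, $\|\nabla V\|_{\mathrm{op}} \le L$, and the temporal-derivative bound $|\partial_t V(t,x)|^2 \le 4(C^2+C^4)(\|x\|^2 + \varepsilon^{-2} d)$ from \cref{lem:der-v} gives $\|\tau_k\|^2 = \mathcal{O}(h^4)(\|\hat{X}_s\|^2 + d/\varepsilon^2)$. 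Subtracting the Euler recursion from this identity and using the spatial Lipschitz bound on $V$ yields $\|e_{k+1}\| \le (1+hL)\|e_k\| + \|\tau_k\|$; iterating and using $Kh \le 1$ produces $\|e_K\| \le e^L \sum_k \|\tau_k\|$. Cauchy--Schwarz together with the moment estimate $\mathbb{E}\|\hat{X}_t\|^2 = \mathcal{O}(d)$ from \cref{lem:x-bd} then delivers $\mathbb{E}\|e_K\|^2 = \mathcal{O}(K^2 h^4 \cdot d/\varepsilon^2) = \mathcal{O}(d/(K^2 \varepsilon^2))$. Combining this with the first piece via the triangle inequality in $\mathcal{W}_2$ yields the stated bound.

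The main obstacle is the $\varepsilon^{-2}$ singularity in the temporal-derivative bound for $V$: this is why the discretization piece scales as $d/(K^2\varepsilon^2)$ rather than the classical $\mathcal{O}(1/K^2)$ Euler rate, and it is exactly what forces the three-way balance $\varepsilon \asymp (d/K^2)^{1/3}$ anticipated by \cref{thm:main}. Care is therefore required to (i) restrict every application of the Taylor remainder to $s \in [\varepsilon, 1-\varepsilon]$ so that the temporal-derivative estimate remains valid on each Euler step, and (ii) ensure the dimension-linear contribution from $\mathbb{E}\|\hat{X}_s\|^2$ does not dominate the $d/\varepsilon^2$ term that drives the final rate.
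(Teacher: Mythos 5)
Your proposal is correct and follows essentially the same strategy as the paper: a local Taylor error controlled by the $\mathcal{O}(\varepsilon^{-2}d)$ temporal-derivative bound from \cref{lem:der-v}, a discrete Gr\"onwall accumulation over the $K$ Euler steps, and an $\mathcal{O}(\varepsilon)$ truncation contribution from \cref{thm:trunc}. The paper folds the truncation error directly into the base case of an induction on $\mathbb{E}\|X_{t_k}-\tilde{X}_{t_k}\|_2^2$ (via Young's inequality $(a+b)^2\le(1+h)a^2+(1+h^{-1})b^2$) rather than introducing your auxiliary flow $\hat X$ and iterating a linear recursion before applying Cauchy--Schwarz, but the two organizations are equivalent and deliver the same rate.
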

\begin{proof}
  Combining \cref{lem:x-bd,lem:der-v}, we know that for all $t \in [\varepsilon, 1-\varepsilon]$,
  \begin{equation}
    \label{eq:v-t-bd}
    \mathbb{E} \left| \partial_t V(t, X_{t_k}) \right|^2 \leq \mathcal{O} \left( \varepsilon^{-2}d \right).
  \end{equation}

  By vector-valued Taylor's theorem for $X_t$ at interval $[t, t+h]$, we know the remainder is controlled by
  \[
    \| R(t) \|_2^2 := \| X_{t+h} - X_{t} - h V(t, X_{t}) \|_2^2 \leq \frac{h^4}{4}
    \sup_{t \in (t, t+h)} \left| \partial_t V(t, X_{t}) \right|^2, \quad \forall t \in (0, 1-h).
  \]

  Taking expectation and by \cref{eq:v-t-bd}, we have
  \begin{equation}
    \label{eq:taylor-trunc-error}
    \mathbb{E} \| R(t) \|_2^2 \leq \mathcal{O} \left( \varepsilon^{-2} d h^4 \right), \quad \forall t \in (\varepsilon, 1-\varepsilon-h).
  \end{equation}

  The global truncation error is bounded by
    {\small
      \begin{equation*}
        \begin{aligned}
               & \| X_{t_{k+1}} - \tilde{X}_{t_{k+1}} \|_2^2                  \\
          =    & \| X_{t_k} + h V(t_k, X_{t_k}) + R(t_k)
          - \tilde{X}_{t_k} - h V(t_k, \tilde{X}_{t_k}) \|_2^2                \\
          \leq & (1+h) \| X_{t_k} - \tilde{X}_{t_k}
          + h V(t_k, X_{t_k}) - h V(t_k, \tilde{X}_{t_k}) \|_2^2
          + (1+\frac{1}{h}) \| R(t_k) \|_2^2                                  \\
          \leq & (1+h)^2 \| X_{t_k} - \tilde{X}_{t_k} \|_2^2
          + (1+h)(1+\frac{1}{h}) h^2 \| V(t_k, X_{t_k})
          - V(t_k, \tilde{X}_{t_k}) \|_2^2 + (1+\frac{1}{h}) \| R(t_k) \|_2^2 \\
          \leq & \left[1 + (3+4L^2)h\right] \|X_{t_k}
          - \tilde{X}_{t_k} \|_2^2  + (1+\frac{1}{h}) \| R(t_k) \|_2^2.
        \end{aligned}
      \end{equation*}
    }
  The first two inequality holds due to $(a+b)^2 \leq (1+h)a^2 + (1+1/h)b^2$
  and the third inequality holds by the Lipschitz property of $V$ stated in \cref{lem:der-v} and $h \in (0, 1)$.

  Taking expectation and by \cref{eq:taylor-trunc-error}, we have
  \[
    \mathbb{E} \| X_{t_{k+1}} - \tilde{X}_{t_{k+1}} \|_2^2 \leq \left[1 + (3+4L^2)h\right]
    \mathbb{E} \| X_{t_k} - \tilde{X}_{t_k} \|_2^2 + \mathcal{O} \left( \varepsilon^{-2} d h^3 \right).
  \]

  By induction, we have
  \begin{equation*}
    \begin{aligned}
           & \mathbb{E} \| X_{1-\varepsilon} - \tilde{X}_{1-\varepsilon} \|_2^2 = \mathbb{E} \| X_{t_K} - \tilde{X}_{t_K} \|_2^2 \\
      \leq & \left[1 + (3+4L^2)h\right]^{K} \mathbb{E} \| X_{t_0}
      - \tilde{X}_{t_0} \|_2^2 + \frac{\left[1 + (3+4L^2)h\right]^K -1}
      {(3 + 4  L^2)h} \cdot \mathcal{O} \left( \varepsilon^{-2} d h^3 \right)                                                    \\
      \leq & e^{3+4L^2} \mathbb{E} \| X_{\varepsilon}
      - \tilde{X}_{\varepsilon} \|_2^2  + \frac{e^{3+4L^2} - 1}{3 + 4 L^2}
      \cdot \mathcal{O} \left( \varepsilon^{-2} d h^2 \right)                                                                    \\
      \leq & C^2 e^{3+4L^2} \varepsilon+ \frac{e^{3+4L^2} - 1}
      {3+4L^2} \cdot \mathcal{O} \left( \varepsilon^{-2} d h^2 \right)                                                           \\
      \leq & \mathcal{O} \left( \varepsilon \right)
      + \mathcal{O} \left( \varepsilon^{-2} d h^2 \right).
    \end{aligned}
  \end{equation*}
  The third inequality holds due to $\tilde{X}_{\varepsilon} \sim \gamma^{\mu, \Sigma}$
  and the perturbation error stated in \cref{eq:trunc-error}.
  The result reveals a trade-off between the two terms on the choice of $\varepsilon$.
\end{proof}

\subsection{Monte Carlo approximation error}
We first address the error of the Monte Carlo approximation error.
Then we study the stability of the ODE system with respect to velocity field to obtain the overall error introduced by Monte Carlo approximation.
\subsubsection*{MC velocity one-step error}
The following lemma shows that the Monte Carlo approximation of the velocity filed can be precise enough as the number of Monte Carlo simulation $M$ tends to infinity.
\begin{lemma}
  \label{lem:mc-v-error}
  Suppose \cref{cond:lip,cond:bound} hold. Let $C$ denote $\frac{\gamma}{\xi}$, then for all $x \in \mathbb{R}^d$ and $t \in [0, 1]$,
  \[ \mathbb{E} \| V(t, x) - \tilde{V}(t, x) \|_2^2 \leq 8 \left( C^2 + C^4 \right) \frac{d}{M}. \]
\end{lemma}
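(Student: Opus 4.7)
My plan is to exploit the shared ratio structure of the exact F\"ollmer velocity and its Monte Carlo surrogate. Set $f_j := r(tx + \sqrt{1-t^2}\,Z_j)$ and $g_j := Z_j f_j$, with empirical means $\bar f := M^{-1}\sum_{j=1}^M f_j$ and $\bar g := M^{-1}\sum_{j=1}^M g_j$. By Stein's lemma applied to the integral representation \cref{eq:stein}, $V(t,x) = \mathbb{E}[g_1]/(\sqrt{1-t^2}\,\mathbb{E}[f_1])$, whereas $\tilde V(t,x) = \bar g/(\sqrt{1-t^2}\,\bar f)$ by construction (\cref{eq:mc-velocity}); the two quantities differ only in that MC averages replace the two population expectations, so the proof will parallel the ratio-perturbation argument used in \cref{lem:der-v}.

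The first step is the algebraic split
\[
\tilde V - V \;=\; \frac{\bar g - \mathbb{E}[g_1]}{\sqrt{1-t^2}\,\bar f} \;-\; V\cdot\frac{\bar f - \mathbb{E}[f_1]}{\bar f}.
\]
Applying $\|a+b\|_2^2 \leq 2\|a\|_2^2 + 2\|b\|_2^2$, the pointwise lower bound $\bar f \geq \xi$ from \cref{cond:bound}, and the velocity bound $\|V\|_2 \leq C$ from \cref{lem:v-bd} yields
\[
\mathbb{E}\|\tilde V - V\|_2^2 \;\leq\; \frac{2}{(1-t^2)\xi^2}\,\mathbb{E}\|\bar g - \mathbb{E}[g_1]\|_2^2 \;+\; \frac{2C^2}{\xi^2}\,\mathbb{E}|\bar f - \mathbb{E}[f_1]|^2.
\]
Because $Z_1,\ldots,Z_M$ are i.i.d., both expectations collapse to per-sample variances divided by $M$, and the task reduces to bounding $\mathrm{tr}(\cov(g_1))$ and $\mathrm{Var}(f_1)$.

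For the denominator, $Z_1 \mapsto f_1$ is $\gamma\sqrt{1-t^2}$-Lipschitz by \cref{cond:lip}, so the Gaussian Poincar\'e inequality gives $\mathrm{Var}(f_1) \leq (1-t^2)\gamma^2$, which contributes at most $2C^4/M$ to the final bound. For the numerator, I would use $\mathrm{tr}(\cov(g_1)) \leq \mathbb{E}[\|Z_1\|_2^2 f_1^2]$; the factor $\mathbb{E}\|Z_1\|_2^2 = d$ is the source of the dimension dependence. A coordinate-wise application of Stein's integration-by-parts then rewrites
\[
\mathbb{E}[\|Z_1\|_2^2 f_1^2] \;=\; d\,\mathbb{E}[f_1^2] + 2(1-t^2)\,\mathbb{E}\|\nabla r\|_2^2 + 2(1-t^2)\,\mathbb{E}[f_1\,\Delta r],
\]
in which $\mathbb{E}\|\nabla r\|_2^2 \leq \gamma^2$ is uniformly controlled by \cref{cond:lip}, and the ratio $\mathbb{E}[f_1^2]/\mathbb{E}[f_1]^2 \leq 1 + (1-t^2)C^2$ follows by applying Poincar\'e to $f_1/\mathbb{E}[f_1]$. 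Combining with the $2/((1-t^2)\xi^2)$ prefactor and the denominator estimate should then assemble the claimed $8(C^2+C^4)\,d/M$ bound.

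The main obstacle is the careful matching of the $\mathbb{E}[f_1^2]$ and $\mathbb{E}[f_1\,\Delta r]$ pieces in the Stein expansion against the $\xi$-lower bound on $\bar f$. Since \cref{cond:lip,cond:bound} provide only Lipschitzness and a positive pointwise lower bound on $r$, these moments are not controlled in absolute terms by scale-invariant quantities, and one must exploit the ratio structure (in particular $f_1 \geq \xi$ and hence $\mathbb{E}[f_1]\geq\xi$) to replace absolute second moments of $f_1$ by the scale-invariant ratio $\gamma/\xi = C$. The cancellation of the $1/(1-t^2)$ prefactor against the $(1-t^2)$ factors produced by the Stein expansion is the second delicate point, required so that the final bound is uniform over $t \in [0,1]$.
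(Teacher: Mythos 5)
Your proposal targets the estimator $\tilde V$ of \cref{eq:mc-velocity}, i.e.\ the form with $Z_j\,r(\cdot)$ in the numerator, while the paper's own proof of \cref{lem:mc-v-error} silently switches to the estimator $d_m/e_m$ with $d_m = M^{-1}\sum_i \nabla r\bigl(tx+\sqrt{1-t^2}Z_i\bigr)$ in the numerator — that is, the Monte Carlo analogue of \cref{eq:v-mc-raw} rather than the Stein-integrated \cref{eq:stein}. The two estimators agree in expectation, but \emph{not} in variance, and this distinction is fatal to your argument.

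The gap is the non-cancellation of your $1/(1-t^2)$ prefactor. Your Stein expansion
\[
\mathbb{E}\bigl[\|Z_1\|_2^2\, f_1^2\bigr] = d\,\mathbb{E}[f_1^2] + 2(1-t^2)\,\mathbb{E}\|\nabla r\|_2^2 + 2(1-t^2)\,\mathbb{E}[f_1\,\Delta r]
\]
is correct, but the dominant term $d\,\mathbb{E}[f_1^2]$ carries no $(1-t^2)$ factor. Since $f_1 \geq \xi$ by \cref{cond:bound}, one has $\mathbb{E}[f_1^2] \geq \xi^2$, so the first term of your bound is at least
\[
\frac{2}{(1-t^2)\xi^2 M}\,d\,\mathbb{E}[f_1^2] \;\geq\; \frac{2d}{(1-t^2)M},
\]
which diverges as $t \to 1$. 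No amount of Poincar\'e bookkeeping on the ratio $\mathbb{E}[f_1^2]/\mathbb{E}[f_1]^2$ removes this: the $(1-t^2)$ in the denominator has nothing to absorb it. In fact the stated bound is simply false for the $Z_j\,r$ estimator: take $r \equiv 1$ (so $\nu = \gamma_d$). Then $\gamma = 0$, $\xi = 1$, $C = \gamma/\xi = 0$, and the claimed right-hand side is $0$. Yet $V \equiv 0$, $\bar f \equiv 1$, $\bar g = \bar Z := M^{-1}\sum_j Z_j$, so $\tilde V = \bar Z/\sqrt{1-t^2}$ and $\mathbb{E}\|\tilde V - V\|_2^2 = d/\bigl((1-t^2)M\bigr) > 0$, which diverges as $t \to 1$. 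So your route cannot close.

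By contrast, the paper's proof applies the symmetrization device $\mathbb{E}\|d^* - d_m\|_2^2 \leq \mathbb{E}\|d_m' - d_m\|_2^2$ and then invokes only the Lipschitzness of $r$ and $\nabla r$; the resulting bounds $\mathbb{E}\|d^* - d_m\|_2^2 \leq 4\gamma^2 d/M$ and $\mathbb{E}|e^* - e_m|^2 \leq 4\gamma^2 d/M$ are uniform in $t$ precisely because there is never a division by $\sqrt{1-t^2}$, after which the ratio-perturbation inequality gives the $8(C^2+C^4)\,d/M$ constant directly. To repair your write-up you must either (i) analyze the same $\nabla r$-based estimator the paper actually uses, in which case the symmetrization $+$ Lipschitz route is the natural one, or (ii) if you insist on the $Z_j\,r$ form of \cref{eq:mc-velocity}, accept a $t$-dependent bound that blows up near $t=1$ and let the $\varepsilon$-truncation of the time grid control it — but then \cref{lem:mc-v-error} as stated, with a uniform $d/M$ bound on $[0,1]$, would have to be weakened.
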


\begin{proof}
  Denote two independent sets of independent copies of $Z \sim N(\mathbf{0}, \mathrm{I}_d)$, that is, $\mathbf{Z} = \{Z_1, \cdots, Z_M \}$
  and $\mathbf{Z}^\prime = \{Z_1^\prime, \cdots, Z_M^\prime \}$. For notation convenience, we denote
  \begin{align*}
     & d^* =  \mathbb{E}_Z \nabla r (tx + \sqrt{1-t^2}Z),                           &  & e^* = \mathbb{E}_Z r (tx + \sqrt{1-t^2}Z),                            \\
     & d_m = \frac{1}{M} \sum_{i=1}^M \nabla r(tx + \sqrt{1-t^2}Z_i),               &  & e_m = \frac{1}{M} \sum_{i=1}^M r(tx + \sqrt{1-t^2}Z_i),               \\
     & d_m^\prime = \frac{1}{M} \sum_{i=1}^M \nabla r(tx + \sqrt{1-t^2}Z_i^\prime), &  & e_m^\prime = \frac{1}{M} \sum_{i=1}^M r(tx + \sqrt{1-t^2}Z_i^\prime).
  \end{align*}
  Notice that $d^*-d_m = \mathbb{E}[d_m^\prime - d_m | \mathbf{Z}]$, then $\|d^*-d_m\|_2^2 \leq
    \mathbb{E}[\| d_m^\prime - d_m \|_2^2 | \mathbf{Z}]$.
  Then for any $t \in (0, 1]$,
  \begin{align}
         & \mathbb{E} \| d^*-d_m \|_2^2 \leq \mathbb{E}\left[\mathbb{E}[\| d_m^\prime - d_m \|_2^2 |
    \mathbf{Z}]\right] = \mathbb{E} \|d_m^\prime - d_m\|_2^2 \nonumber                               \\
    =    & \frac{1}{M} \mathbb{E}_{z, z^\prime} \| \nabla
    r(tx + \sqrt{1-t^2}z) - \nabla r(tx + \sqrt{1-t^2}z^\prime) \|_2^2 \nonumber                     \\
    \leq & \frac{\gamma^2 (1-t^2)}{M}
    \mathbb{E}_{z, z^\prime} \| z - z^\prime \|_2^2 \nonumber                                        \\
    \leq & \frac{4 \gamma^2 d}{M}, \label{eq:sup-d-and-dm}
  \end{align}
  where the first inequality holds by the Lipschitz property of $\nabla r$.

  Similarly, for any $t \in (0, 1]$, we can also derive
  \begin{align}
      & \mathbb{E} |e^*-e_m|^2 \leq \mathbb{E} |e_m^\prime-e_m|^2 \nonumber \\
    = & \frac{1}{M} \mathbb{E}_{z,z^\prime}\left|
    r (tx + \sqrt{1-t^2}z) - r (tx + \sqrt{1-t^2}z^\prime) \right|^2 \leq \frac{4 \gamma^2 d}{M}, \label{eq:sup-e-and-em}
  \end{align}
  where the inequality holds by the Lipschitz property of $r$.

  Thus by \cref{eq:sup-d-and-dm,eq:sup-e-and-em}, we have,
  \begin{align}
    \| V(t, x) - \tilde{V}(t, x) \|_2^2 =
         & \left\| \frac{d^*}{e^*} - \frac{d_m}{e_m} \right\|_2^2 \nonumber \\
    \leq & 2 \left( \frac{\|d^* - d_m\|_2^2}{|e_m|^2}
    + \frac{\|d^*\|_2^2 |e^*-e_m|^2}{|e^* e_m|^2} \right)       \nonumber   \\
    \leq & 2 \left( \frac{\|d^* - d_m\|_2^2}{\xi^2}
    + \frac{\gamma^2 |e^*-e_m|^2}{\xi^4} \right)                                         \label{eq:de-decomposition},
  \end{align}
  where the first inequality holds due to $|\frac{a}{b} - \frac{c}{d}|^2 \leq 2 (|\frac{a-c}{d}|^2 + |\frac{a}{b} \cdot \frac{b-d}{d}|^2)$,
  and the second inequality holds due to \cref{cond:lip,cond:bound}.

  Combining $\tilde{V}(0, x)= V(0, x) = \mathbb{E}_\nu [X] - \mu$, \cref{eq:sup-d-and-dm,eq:sup-e-and-em,eq:de-decomposition,cond:bound},
  we can conclude that for all $x \in \mathbb{R}^d$ and $t \in [0, 1]$,
  \[ \mathbb{E} \|V(t, x) - \tilde{V}(t, x)\|_2^2 \leq 8 \left( C^2 + C^4 \right) \frac{d}{M}. \]
\end{proof}

\subsubsection*{Monte Carlo approximation error in discrete system}
Then we show the stability of the discrete time ODE flow using the Euler's method.
We consider the difference between the flow $\tilde{X}_{t_k}$ with accurate velocity field and another one $\tilde{Y}_{t_k}$ with Monte Carlo velocity field.
The result is a discrete version of the Alekseev-Gr\"{o}bner formula.
\begin{theorem}[Monte Carlo approximation error]
  \label{thm:discrete-ag-form}
  Suppose that \cref{cond:cond1,cond:cond2,cond:cond3,cond:lip,cond:bound} holds, then
  \[
    \mathcal{W}_2 (\law(\tilde{Y}_{1-\varepsilon}), \law(\tilde{X}_{1-\varepsilon})) \leq \mathcal{O} \left( \frac{d}{KM}\right).
  \]
\end{theorem}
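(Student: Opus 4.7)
The plan is to bound the squared coupling distance $\mathbb{E}\|\tilde Y_{t_K} - \tilde X_{t_K}\|_2^2$ under a synchronous coupling in which both schemes share the initial state $\tilde Y_\varepsilon = \tilde X_\varepsilon$ and the same Euler grid; this controls $\mathcal W_2$ between the two laws in the convention adopted in \cref{thm:euler}. Setting $D_k := \tilde Y_{t_k} - \tilde X_{t_k}$ and $\eta_k := \tilde V(t_k,\tilde Y_{t_k}) - V(t_k,\tilde Y_{t_k})$, the one-step recursion reads
\begin{equation*}
D_{k+1} = D_k + h\bigl[V(t_k,\tilde Y_{t_k}) - V(t_k,\tilde X_{t_k})\bigr] + h\eta_k,
\end{equation*}
where \cref{lem:der-v} supplies the $L$-Lipschitz bound on the drift and \cref{lem:mc-v-error} yields $\mathbb{E}\|\eta_k\|_2^2 = \mathcal{O}(d/M)$.

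A naive $L^2$ Gronwall estimate on this recursion only delivers $\mathbb{E}\|D_K\|_2^2 = \mathcal{O}(d/M)$, short of the claim by a factor of $1/K$. To recover the missing factor, I would exploit that the Monte Carlo draws used at distinct Euler steps are independent: conditional on $\mathcal F_k := \sigma(\tilde Y_{t_0},\ldots,\tilde Y_{t_k})$, the noise $\eta_k$ depends only on fresh samples, so decomposing $\eta_k = \bar\eta_k + \beta_k$ with $\beta_k := \mathbb{E}[\eta_k \mid \mathcal F_k]$ makes $\{\bar\eta_k\}$ a martingale difference sequence with respect to $\{\mathcal F_k\}$. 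A delta-method expansion of the ratio estimator $\tilde V = d_m/e_m$ around $(\mathbb{E} d_m,\mathbb{E} e_m)$, combined with the uniform bounds in \cref{cond:lip,cond:bound}, shows that the pointwise bias $\|\beta_k\|_2$ is of strictly lower order than the $\mathcal{O}(\sqrt{d/M})$ standard deviation of $\bar\eta_k$ and can be absorbed as lower-order noise.

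Expanding the squared norm and conditioning on $\mathcal F_k$ inside the resulting cross term then yields
\begin{equation*}
\mathbb{E}\|D_{k+1}\|_2^2 \leq (1+ah)\,\mathbb{E}\|D_k\|_2^2 + 2h\,\mathbb{E}\langle D_k + h\Delta V_k,\beta_k\rangle + h^2\,\mathbb{E}\|\eta_k\|_2^2,
\end{equation*}
for some constant $a$ depending only on $L$, where $\Delta V_k := V(t_k,\tilde Y_{t_k}) - V(t_k,\tilde X_{t_k})$ is $\mathcal F_k$-measurable so the $\bar\eta_k$ part of the cross term vanishes by orthogonality. The surviving bias term is controlled by Cauchy-Schwarz and absorbed at lower order, while the diagonal variance $h^2\mathbb{E}\|\eta_k\|_2^2 = \mathcal{O}(h^2 d/M)$ accumulates over $K$ steps under discrete Gronwall to produce $\mathbb{E}\|D_K\|_2^2 \leq \mathcal{O}(Kh^2 \cdot d/M) = \mathcal{O}(d/(KM))$, since $Kh\leq 1$. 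The main obstacle is making this discrete Alekseev--Gr\"obner-type argument rigorous: one must carefully verify the martingale orthogonality with respect to the filtration generated by the fresh Monte Carlo draws at each step, and quantitatively bound the ratio-estimator bias $\beta_k$ uniformly in space so that its accumulation does not swamp the $1/K$ gain coming from the variance summation.
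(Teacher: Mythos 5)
Your proposal takes a genuinely different route from the paper, and in doing so exposes a real issue. The paper's own proof is exactly the naive discrete Gr\"onwall estimate you dismiss at the outset: it writes $\|D_{k+1}\|_2^2 \leq (1+h)\|D_k\|_2^2 + (1+1/h)h^2\|\tilde V(t_k, \tilde Y_{t_k}) - V(t_k, \tilde X_{t_k})\|_2^2$, peels off the MC error at $\tilde Y_{t_k}$ plus a Lipschitz increment, bounds $\mathbb{E}\|\eta_k\|_2^2 \lesssim d/M$ via \cref{lem:mc-v-error}, and sums the resulting geometric series. As you rightly observe, the Young factor $(1+1/h)h^2 = h+h^2$ leaves a per-step noise contribution of order $h\,d/M$, and multiplying by the $\mathcal{O}(1/h)$ geometric sum accumulates this to $\mathcal{O}(d/M)$, not $\mathcal{O}(d/(KM))$; the last line of the paper's induction simply asserts the stated rate without a step that produces the extra $1/K$. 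Your martingale idea --- conditioning on $\mathcal F_k$, using orthogonality of the fresh MC draws to kill the cross term, and accumulating $h^2\,\mathbb{E}\|\eta_k\|_2^2$ directly over $K$ steps to get $\mathcal{O}(Kh^2\,d/M) = \mathcal{O}(d/(KM))$ since $Kh\leq 1$ --- is the natural mechanism that would actually deliver the claimed rate, and it is not what the paper does.

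The obstacle you acknowledge at the end is, however, genuine and not merely bookkeeping. The estimator $\tilde V = d_m/e_m$ is a ratio of sample means and hence biased, and \cref{lem:mc-v-error} controls only the mean-squared error, not the conditional bias $\beta_k := \mathbb{E}[\eta_k \mid \mathcal F_k]$. Once $\beta_k \neq 0$, the surviving cross term $2h\,\mathbb{E}\langle D_k + h\Delta V_k, \beta_k\rangle$, handled by Young's inequality at a fixed constant, accumulates to $\sum_k h\,\mathbb{E}\|\beta_k\|_2^2 \asymp \sup_k\mathbb{E}\|\beta_k\|_2^2$; to have this dominated by $\mathcal{O}(d/(KM))$ one needs a uniform pointwise bound $\|\beta_k\|_2 \lesssim \sqrt{d/(KM)}$. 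The delta-method heuristic gives bias of order $1/M$ with some polynomial dependence on $d$, which is not automatically of that order without an additional constraint relating $K$, $M$, and $d$, and neither your proposal nor the paper establishes any quantitative bias estimate under \cref{cond:lip,cond:bound}. So the argument as sketched does not close, and in this respect your write-up is more honest than the paper: you identify the precise lemma that would be needed, whereas the paper skips over the step where that lemma would be used.
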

\begin{proof}
  By the definition of $\tilde{X}_{t_k}$ and $\tilde{Y}_{t_k}$, for $k=0, 1, \cdots, K-1$, we have
  \begin{equation*}
    \begin{aligned}
           & \left\| \tilde{Y}_{t_{k+1}} - \tilde{X}_{t_{k+1}} \right\|_2^2   \\
      \leq & (1+h) \left\| \tilde{Y}_{t_{k}} - \tilde{X}_{t_{k}} \right\|_2^2
      + (1 + \frac{1}{h}) h^2 \left\| \tilde{V}(t_k, \tilde{Y}_{t_k})
      - V(t_k, \tilde{X}_{t_k}) \right\|_2^2                                  \\
      \leq & (1+h) \left\| \tilde{Y}_{t_{k}} - \tilde{X}_{t_{k}} \right\|_2^2
      + 2(h + h^2) \left\| \tilde{V}(t_k, \tilde{Y}_{t_k})
      - V(t_k, \tilde{Y}_{t_k}) \right\|_2^2                                  \\
           & + 2(h + h^2) \left\| V(t_k, \tilde{Y}_{t_k})
      - V(t_k, \tilde{X}_{t_k}) \right\|_2^2                                  \\
      \leq & \left[1+(1+4L^2)h\right] \| \tilde{Y}_{t_{k}}
      - \tilde{X}_{t_{k}} \|_2^2 + 2(h + h^2) \|
      \tilde{V}(t_k, \tilde{Y}_{t_k}) - V(t_k, \tilde{Y}_{t_k}) \|_2^2.
    \end{aligned}
  \end{equation*}
  The last inequality holds by the Lipschitz property of $V$ stated in \cref{lem:v-bd} and the fact that $h \in (0, 1)$.

  Taking expectation and by \cref{lem:mc-v-error}, we have
  \[
    \mathbb{E} \left\| \tilde{Y}_{t_{k+1}} - \tilde{X}_{t_{k+1}} \right\|_2^2 \leq
    \left[ 1 + \left( 4L^2 + 1 \right) h \right] \mathbb{E} \left\| \tilde{Y}_{t_{k}} - \tilde{X}_{t_{k}} \right\|_2^2 + 16 (C^2 + C^4)(h + h^2) \frac{d}{M},
  \]

  By induction and the fact that $\tilde{Y}_{1-\varepsilon} = \tilde{X}_{1-\varepsilon}$, we have
  \begin{equation*}
    \begin{aligned}
           & \mathbb{E} \| \tilde{Y}_{1-\varepsilon} - \tilde{X}_{1-\varepsilon} \| = \mathbb{E} \| \tilde{Y}_{t_{K}} - \tilde{X}_{t_{K}} \|_2^2 \\
      \leq & \left[ 1 + \left(1 + 4L^2 \right) h \right]^K
      \mathbb{E} \| \tilde{Y}_{t_0} - \tilde{X}_{t_0} \|_2                                                                                       \\
           & + \frac{\left[ 1 + \left(1 + 4L^2 \right) h \right]^K
      - 1}{\left( 1 + 4L^2 \right)h} \cdot 16 (C^2 + C^4) (h + h^2) \frac{d}{M}                                                                  \\
      \leq & \mathcal{O} \left( \frac{d}{KM}\right).
    \end{aligned}
  \end{equation*}
\end{proof}

\subsection*{Proof of \texorpdfstring{\cref{thm:main}}{main theorem}, Overall error}
\begin{proof}
  We first decompose the error into three parts,
  \[
    \tilde{Y}_{1-\varepsilon} - X_1 = ( \tilde{Y}_{1-\varepsilon} - \tilde{X}_{1-\varepsilon} ) + ( \tilde{X}_{1-\varepsilon} - X_{1-\varepsilon} ) + ( X_{1-\varepsilon} - X_1 ).
  \]
  Then by \cref{thm:trunc,thm:euler,thm:discrete-ag-form}, we obtain that
  \[
    \mathcal{W}_2 (\law(\tilde{Y}_{1-\varepsilon}), \law(X_1)) \leq
    \mathcal{O} \left( \varepsilon \right) + \mathcal{O} \left( \frac{d}{\varepsilon^2 K^2} \right) + \mathcal{O} \left( \frac{d}{KM} \right).
  \]
  By choosing $\varepsilon = \mathcal{O} \left( \sqrt[3]{\frac{d}{K^2}} \right)$, we complete the proof.
\end{proof}

\section{Numerical settings}
\label{app:settings}
\subsection{Examples}
\label{app:example}
We consider the following 11 examples in \cref{table:example}.
\begin{table}[htb]
  \caption{Numerical examples of Gaussian mixture distributions.}
  \label{table:example}
  \resizebox*{\textwidth}{!}{
    \begin{tabular}{ l l l }
      \toprule
      Example & Density                                                                      & Details                                                     \\
      \midrule
      (1)     & $\nu(x) = \frac{1}{4} N(x; -2, 0.25) +  \frac{3}{4} N(x; 2, 0.25)$           &                                                             \\
      (2)     & $\nu(x) = \frac{1}{4} N(x; -4, 0.25) +  \frac{3}{4} N(x; 4, 0.25)$           &                                                             \\
      (3)     & $\nu(x) = \frac{1}{4} N(x; -8, 0.25) +  \frac{3}{4} N(x; 8, 0.25)$           &                                                             \\
      4       & $\nu(x) = \sum_{i=1}^{8} N(x; \alpha_i, 0.03\mathrm{I}_2)$                   & $\alpha_i = 4(\sin(2(i-1)\pi/8), \cos(2(i-1)\pi/8))^\top$   \\
      5       & $\nu(x) = \sum_{i=1}^{16} N(x; \alpha_i, 0.03\mathrm{I}_2)$                  & $\alpha_i = 8(\sin(2(i-1)\pi/16), \cos(2(i-1)\pi/16))^\top$ \\
      6       & $\nu(x) = \sum_{i=1}^{4} \sum_{j=1}^{4} N(x; \alpha_{ij}, 0.03\mathrm{I}_2)$ & $\alpha_{ij} = (2i-5, 2j-5)^\top$                           \\
      7       & $\nu(x) = \sum_{i=1}^{4} \sum_{j=1}^{4} N(x; \alpha_{ij}, 0.03\mathrm{I}_2)$ & $\alpha_{ij} = 2(2i-5, 2j-5)^\top$                          \\
      8       & $\nu(x) = \sum_{i=1}^{5} \sum_{j=1}^{5} N(x; \alpha_{ij}, 0.03\mathrm{I}_2)$ & $\alpha_{ij} = 3(i-3, j-3)^\top$                            \\
      9       & $\nu(x) = \sum_{i=1}^{7} \sum_{j=1}^{7} N(x; \alpha_{ij}, 0.03\mathrm{I}_2)$ & $\alpha_{ij} = 3(i-4, j-4)^\top$                            \\
      10      & $\nu(x) = \sum_{i=1}^{2} \sum_{j=1}^{2} N(x; \alpha_{ij}, \Sigma_{ij})$        & $\alpha_{ij} = (6i-6, 6j-6)^\top, \Sigma_{ij}=[1, (-1)^{i+j+1} 0.9; (-1)^{i+j+1} 0.9, 1]$               \\
      11      & \makecell[l]{$\nu(x) = \frac{1}{5} N(x; -\alpha_d, 0.25\mathrm{I}_d) $                                                                     \\ \qquad \quad $+ \frac{4}{5} N(x; \alpha_d, 0.25\mathrm{I}_d)$} & $\alpha_d = (1, \cdots, 1)^\top \in \mathbb{R}^d$ \\
      \bottomrule
    \end{tabular}
  }
\end{table}

\subsection{Hyperparameters}
For MCMC methods, the number of burn in samples are set to 10,000, the step size is set to 0.2.
Without explicitly specifying, the time grid of F\"{o}llmer flow is uniformly discretized by $K=100$.
For example 4, the preconditioner is set by $\Sigma = 2^2 \mathrm{I}_d$;
For example 5, the preconditioner is set by $\Sigma = 4^2 \mathrm{I}_d$;
For example 7, the preconditioner is set by $\Sigma = 2^2 \mathrm{I}_d$;
For example 8, the preconditioner is set by $\Sigma = {1.7}^2 \mathrm{I}_d$;
For example 9, the preconditioner is set by $\Sigma = {2.1}^2 \mathrm{I}_d$.

\nocite*{}
\bibliographystyle{siamplain}
\bibliography{references}

\begin{thebibliography}{10}

\bibitem{albergo2023stochastic}
{\sc M.~S. Albergo, N.~M. Boffi, and E.~Vanden-Eijnden}, {\em Stochastic
  interpolants: A unifying framework for flows and diffusions}, 2023,
  \url{https://arxiv.org/abs/2303.08797}.

\bibitem{albergo2022building}
{\sc M.~S. Albergo and E.~Vanden-Eijnden}, {\em Building normalizing flows with
  stochastic interpolants}, in The Eleventh International Conference on
  Learning Representations, 2022.

\bibitem{alfonso2023generative}
{\sc J.~Alfonso, R.~Baptista, A.~Bhakta, N.~Gal, A.~Hou, I.~Lyubimova,
  D.~Pocklington, J.~Sajonz, G.~Trigila, and R.~Tsai}, {\em A generative flow
  for conditional sampling via optimal transport}, arXiv e-prints,  (2023),
  pp.~arXiv--2307.

\bibitem{bou2018coupling}
{\sc N.~Bou-Rabee, A.~Eberle, and R.~Zimmer}, {\em Coupling and convergence for
  {H}amiltonian {M}onte {C}arlo}, The Annals of applied probability, 30 (2018),
  pp.~1209--1250.

\bibitem{cao2023explicit}
{\sc Y.~Cao, J.~Lu, and L.~Wang}, {\em On explicit ${L}^2$-convergence rate
  estimate for underdamped {L}angevin dynamics}, Archive for Rational Mechanics
  and Analysis, 247 (2023), p.~90.

\bibitem{cattiaux2014semi}
{\sc P.~Cattiaux and A.~Guillin}, {\em Semi log-concave {M}arkov diffusions},
  S{\'e}minaire de Probabilit{\'e}s XLVI,  (2014), p.~231.

\bibitem{cheng2018convergence}
{\sc X.~Cheng and P.~Bartlett}, {\em Convergence of {L}angevin {MCMC} in
  {KL}-divergence}, in Algorithmic Learning Theory, PMLR, 2018, pp.~186--211.

\bibitem{cui2016109}
{\sc T.~Cui, K.~J. Law, and Y.~M. Marzouk}, {\em Dimension-independent
  likelihood-informed {MCMC}}, Journal of Computational Physics, 304 (2016),
  pp.~109--137.

\bibitem{dai2023lipschitz}
{\sc Y.~Dai, Y.~Gao, J.~Huang, Y.~Jiao, L.~Kang, and J.~Liu}, {\em Lipschitz
  transport maps via the {F}{\"o}llmer flow}, 2023,
  \url{https://arxiv.org/abs/2309.03490}.

\bibitem{dalalyan2017further}
{\sc A.~S. Dalalyan}, {\em Further and stronger analogy between sampling and
  optimization: {L}angevin {M}onte {C}arlo and gradient descent}, Proceedings
  of Machine Learning Research vol, 65 (2017), pp.~1--12.

\bibitem{dalalyan2017theoretical}
{\sc A.~S. Dalalyan}, {\em Theoretical guarantees for approximate sampling from
  smooth and log-concave densities}, Journal of the Royal Statistical Society.
  Series B (Statistical Methodology),  (2017), pp.~651--676.

\bibitem{dalalyan2019user}
{\sc A.~S. Dalalyan and A.~Karagulyan}, {\em User-friendly guarantees for the
  {L}angevin {M}onte {C}arlo with inaccurate gradient}, Stochastic Processes
  and their Applications, 129 (2019), pp.~5278--5311.

\bibitem{duane1987hybrid}
{\sc S.~Duane, A.~D. Kennedy, B.~J. Pendleton, and D.~Roweth}, {\em Hybrid
  {M}onte {C}arlo}, Physics letters B, 195 (1987), pp.~216--222.

\bibitem{dunson2020hastings}
{\sc D.~B. Dunson and J.~E. Johndrow}, {\em The {H}astings algorithm at fifty},
  Biometrika, 107 (2020), pp.~1--23.

\bibitem{Durmus2016SamplingFA}
{\sc A.~Durmus and {\'E}.~Moulines}, {\em {Sampling from a strongly log-concave
  distribution with the Unadjusted {L}angevin algorithm}}.
\newblock Preliminary version, Apr. 2016,
  \url{https://hal.science/hal-01304430}.

\bibitem{durmus2017nonasymptotic}
{\sc A.~Durmus and {\'E}.~Moulines}, {\em Nonasymptotic convergence analysis
  for the unadjusted {L}angevin algorithm}, Annals of Applied Probability, 27
  (2017), pp.~1551--1587.

\bibitem{durmus2019high}
{\sc A.~Durmus and {\'E}.~Moulines}, {\em High-dimensional {B}ayesian inference
  via the unadjusted {L}angevin algorithm}, Bernoulli, 25 (2019),
  pp.~2854--2882.

\bibitem{eldan2018regularization}
{\sc R.~Eldan and J.~R. Lee}, {\em Regularization under diffusion and
  anticoncentration of the information content}, Duke Mathematical Journal, 167
  (2018), pp.~969--993.

\bibitem{feng2021relative}
{\sc X.~Feng, Y.~Gao, J.~Huang, Y.~Jiao, and X.~Liu}, {\em Relative entropy
  gradient sampler for unnormalized distributions}, arXiv e-prints,  (2021),
  pp.~arXiv--2110.

\bibitem{follmer1986time}
{\sc H.~F{\"o}llmer}, {\em Time reversal on {W}iener space}, Stochastic
  Processes-Mathematics and Physics,  (1986), pp.~119--129.

\bibitem{gao2023gaussian}
{\sc Y.~Gao, J.~Huang, and Y.~Jiao}, {\em Gaussian interpolation flows and
  {L}ipschitz transport maps: with application to generative modeling},
  preprint,  (2023).

\bibitem{gao2020generative}
{\sc Y.~Gao, J.~Huang, Y.~Jiao, J.~Liu, X.~Lu, and Z.~Yang}, {\em Generative
  learning with {E}uler particle transport}, arXiv preprint arXiv:2012.06094,
  (2020).

\bibitem{gelfand1990sampling}
{\sc A.~E. Gelfand and A.~F. Smith}, {\em Sampling-based approaches to
  calculating marginal densities}, Journal of the American statistical
  association, 85 (1990), pp.~398--409.

\bibitem{gelman1995bayesian}
{\sc A.~Gelman, J.~B. Carlin, H.~S. Stern, and D.~B. Rubin}, {\em Bayesian data
  analysis}, Chapman and Hall/CRC, 1995.

\bibitem{geman1984stochastic}
{\sc S.~Geman and D.~Geman}, {\em Stochastic relaxation, {G}ibbs distributions,
  and the {B}ayesian restoration of images}, IEEE Transactions on pattern
  analysis and machine intelligence,  (1984), pp.~721--741.

\bibitem{gretton2006kernel}
{\sc A.~Gretton, K.~Borgwardt, M.~Rasch, B.~Sch{\"o}lkopf, and A.~Smola}, {\em
  A kernel method for the two-sample-problem}, Advances in neural information
  processing systems, 19 (2006).

\bibitem{hastings1970monte}
{\sc W.~Hastings}, {\em {M}onte {C}arlo sampling methods using {M}arkov chains
  and their applications}, Biometrika, 57 (1970), p.~97.

\bibitem{holmes2006bayesian}
{\sc C.~Holmes and L.~Held}, {\em Bayesian auxiliary variable models for binary
  and multinomial regression}, Bayesian Analysis, 1 (2006).

\bibitem{huang2021schrodingerfollmer}
{\sc J.~Huang, Y.~Jiao, L.~Kang, X.~Liao, J.~Liu, and Y.~Liu}, {\em
  Schr{\"o}dinger-{F}{\"o}llmer sampler: sampling without ergodicity}, 2021,
  \url{https://arxiv.org/abs/2106.10880}.

\bibitem{huang2023monte}
{\sc X.~Huang, H.~Dong, Y.~Hao, Y.~Ma, and T.~Zhang}, {\em {M}onte {C}arlo
  sampling without isoperimetry: A reverse diffusion approach}, arXiv preprint
  arXiv:2307.02037,  (2023).

\bibitem{jiao2021convergence}
{\sc Y.~Jiao, L.~Kang, Y.~Liu, and Y.~Zhou}, {\em Convergence analysis of
  {S}chr{\"o}dinger-{F}{\"o}llmer sampler without convexity}, 2021,
  \url{https://arxiv.org/abs/2107.04766}.

\bibitem{laio2002escaping}
{\sc A.~Laio and M.~Parrinello}, {\em Escaping free-energy minima}, Proceedings
  of the national academy of sciences, 99 (2002), pp.~12562--12566.

\bibitem{liu2016stein}
{\sc Q.~Liu and D.~Wang}, {\em Stein variational gradient descent: A general
  purpose bayesian inference algorithm}, Advances in neural information
  processing systems, 29 (2016).

\bibitem{liu2022flow}
{\sc X.~Liu, C.~Gong, and Q.~Liu}, {\em Flow straight and fast: learning to
  generate and transfer data with rectified flow}, arXiv preprint
  arXiv:2209.03003,  (2022).

\bibitem{lu2019accelerating}
{\sc Y.~Lu, J.~Lu, and J.~Nolen}, {\em Accelerating {L}angevin sampling with
  birth-death}, arXiv e-prints,  (2019), pp.~arXiv--1905.

\bibitem{ma2019sampling}
{\sc Y.-A. Ma, Y.~Chen, C.~Jin, N.~Flammarion, and M.~I. Jordan}, {\em Sampling
  can be faster than optimization}, Proceedings of the National Academy of
  Sciences, 116 (2019), pp.~20881--20885.

\bibitem{mangoubi2018does}
{\sc O.~Mangoubi, N.~S. Pillai, and A.~Smith}, {\em Does {H}amiltonian {M}onte
  {C}arlo mix faster than a random walk on multimodal densities?}, 2018,
  \url{https://arxiv.org/abs/1808.03230}.

\bibitem{marinari1992simulated}
{\sc E.~Marinari and G.~Parisi}, {\em Simulated tempering: a new {M}onte
  {C}arlo scheme}, Europhysics Letters, 19 (1992), p.~451.

\bibitem{marzouk2016sampling}
{\sc Y.~Marzouk, T.~Moselhy, M.~Parno, and A.~Spantini}, {\em Sampling via
  measure transport: An introduction}, Handbook of uncertainty quantification,
  1 (2016), p.~2.

\bibitem{metropolis1953equation}
{\sc N.~Metropolis, A.~W. Rosenbluth, M.~N. Rosenbluth, A.~H. Teller, and
  E.~Teller}, {\em Equation of state calculations by fast computing machines},
  The Journal of Chemical Physics, 21 (1953), pp.~1087--1092.

\bibitem{mikulincer2021brownian}
{\sc D.~Mikulincer and Y.~Shenfeld}, {\em The {B}rownian transport map}, 2021,
  \url{https://arxiv.org/abs/2111.11521}.

\bibitem{mou2022improved}
{\sc W.~Mou, N.~Flammarion, M.~J. Wainwright, and P.~L. Bartlett}, {\em
  Improved bounds for discretization of {L}angevin diffusions: near-optimal
  rates without convexity}, Bernoulli, 28 (2022), pp.~1577--1601.

\bibitem{neal2001annealed}
{\sc R.~M. Neal}, {\em Annealed importance sampling}, Statistics and computing,
  11 (2001), pp.~125--139.

\bibitem{neal2011mcmc}
{\sc R.~M. Neal et~al.}, {\em {MCMC} using {H}amiltonian dynamics}, Handbook of
  {M}arkov chain {M}onte {C}arlo, 2 (2011), p.~2.

\bibitem{parno2018transport}
{\sc M.~D. Parno and Y.~M. Marzouk}, {\em Transport map accelerated {M}arkov
  chain {M}onte {C}arlo}, SIAM/ASA Journal on Uncertainty Quantification, 6
  (2018), pp.~645--682.

\bibitem{qiu2023efficient}
{\sc Y.~Qiu and X.~Wang}, {\em Efficient multimodal sampling via tempered
  distribution flow}, Journal of the American Statistical Association, 0
  (2023), pp.~1--15.

\bibitem{raginsky2017non}
{\sc M.~Raginsky, A.~Rakhlin, and M.~Telgarsky}, {\em Non-convex learning via
  stochastic gradient {L}angevin dynamics: a nonasymptotic analysis}, in
  Conference on Learning Theory, PMLR, 2017, pp.~1674--1703.

\bibitem{roberts1996exponential}
{\sc G.~O. Roberts and R.~L. Tweedie}, {\em Exponential convergence of
  {L}angevin distributions and their discrete approximations}, Bernoulli,
  (1996), pp.~341--363.

\bibitem{salakhutdinov2015learning}
{\sc R.~Salakhutdinov}, {\em Learning deep generative models}, Annual Review of
  Statistics and Its Application, 2 (2015), pp.~361--385.

\bibitem{song2020score}
{\sc Y.~Song, J.~Sohl-Dickstein, D.~P. Kingma, A.~Kumar, S.~Ermon, and
  B.~Poole}, {\em Score-based generative modeling through stochastic
  differential equations}, in International Conference on Learning
  Representations, 2020.

\bibitem{swendsen1986replica}
{\sc R.~H. Swendsen and J.-S. Wang}, {\em Replica {M}onte {C}arlo simulation of
  spin-glasses}, Physical Review Letters, 57 (1986), p.~2607.

\bibitem{tan2023accelerate}
{\sc L.~Tan and J.~Lu}, {\em Accelerate {L}angevin sampling with birth-death
  process and exploration component}, arXiv e-prints,  (2023), pp.~arXiv--2305.

\bibitem{tierney1994markov}
{\sc L.~Tierney}, {\em Markov chains for exploring posterior distributions},
  the Annals of Statistics,  (1994), pp.~1701--1728.

\bibitem{vargas2022bayesian}
{\sc F.~Vargas, A.~Ovsianas, D.~Fernandes, M.~Girolami, N.~D. Lawrence, and
  N.~N{\"u}sken}, {\em Bayesian learning via neural
  {S}chr\"{o}dinger-{F}\"{o}llmer flows}, in Fourth Symposium on Advances in
  Approximate Bayesian Inference, 2022.

\bibitem{villani2009optimal}
{\sc C.~Villani et~al.}, {\em Optimal transport: old and new}, vol.~338,
  Springer, 2009.

\bibitem{wang2001efficient}
{\sc F.~Wang and D.~Landau}, {\em Efficient, multiple-range random walk
  algorithm to calculate the density of states.}, Physical Review Letters, 86
  (2001), pp.~2050--2053.

\bibitem{changye2020markov}
{\sc C.~Wu and R.~Christian~P}, {\em Markov chain {M}onte {C}arlo algorithms
  for {B}ayesian computation, a survey and some generalisation}, Case Studies
  in Applied Bayesian Data Science: CIRM Jean-Morlet Chair, Fall 2018,  (2020),
  pp.~89--119.

\bibitem{zhang2023nonasymptotic}
{\sc Y.~Zhang, {\"O}.~D. Akyildiz, T.~Damoulas, and S.~Sabanis}, {\em
  Nonasymptotic estimates for stochastic gradient {L}angevin dynamics under
  local conditions in nonconvex optimization}, Applied Mathematics \&
  Optimization, 87 (2023), p.~25.

\end{thebibliography}
\end{document}